\documentclass[english,final, twocolumn]{IEEEtran}
\usepackage[T1]{fontenc}
\usepackage[latin9]{inputenc}
\usepackage{mathtools}
\usepackage{bm}
\usepackage{amsmath}
\usepackage{amsthm}
\usepackage{amssymb}
\usepackage{graphicx}

\makeatletter

\theoremstyle{plain}
\newtheorem{prop}{\protect\propositionname}
\newtheorem*{remark}{\protect\remarkname}
\usepackage{psfrag}
\usepackage{cite}
\usepackage{subfigure}\usepackage{algorithm}
\usepackage{algpseudocode}
\usepackage{bm}          

\author{
	Hao~Sun, Shicong~Liu,~\IEEEmembership{Graduate Student Member,~IEEE},~Xianghao~Yu,~\IEEEmembership{Senior Member,~IEEE},~Ying~Sun,~\IEEEmembership{Member,~IEEE},~and Liu Cao,~\IEEEmembership{Member,~IEEE}
	\thanks{Hao Sun, Shicong Liu, and Xianghao Yu are with the Department of Electrical Engineering, City University of Hong Kong, Hong Kong (e-mail:
		hao.sun@cityu.edu.hk; sc.liu@my.cityu.edu.hk; alex.yu@cityu.edu.hk).}\thanks{Ying Sun is with the School of Electrical Engineering and
		Computer Science, Pennsylvania State University, State College, PA 16801
		USA (e-mail: ybs5190@psu.edu).}\thanks{Liu Cao is with the
		City University of Hong Kong (Dongguan), Dongguan 523808, China,
		and also with the City University of Hong Kong, Hong Kong (e-mail:
		liu.cao@cityu-dg.edu.cn).}
	\thanks{A preliminary version of this work appeared in Proc. IEEE ICASSP 2026~\cite{SunLiuYuXSun:C26}. This journal version substantially extends the conference version by developing a time-truncated PnP-flow reconstruction algorithm with an ODE-method interpretation, refining the active trajectory-planning formulation, and providing significantly expanded experimental validation.}}

\usepackage[acronym]{glossaries}
\newcommand{\newac}{\newacronym}
\newcommand{\ac}{\gls}
\newcommand{\Ac}{\Gls}

\makeglossaries

\newac{speb}{SPEB}{square position error bound}
\newac[plural=EFIMs,firstplural=Fisher information matrices (EFIMs)]{efim}{EFIM}{Fisher information matrix}
\newac{ne}{NE}{Nash equilibrium}
\newac{mse}{MSE}{mean squared error}
\newac{toa}{TOA}{time of arrival}
\newac{tdoa}{TDOA}{time difference of arrival}
\newac{snr}{SNR}{signal-to-noise ratio}
\newac{lan}{LAN}{local area network}
\newac{psd}{PSD}{positive semidefinite}
\newac{pd}{PD}{positive definite}
\newac{wrt}{w.r.t.}{with respect to}
\newac{lhs}{L.H.S.}{left hand side}
\newac{wp1}{w.p.1}{with probability 1}
\newac{kkt}{KKT}{Karush-Kuhn-Tucker}
\newac{wlog}{w.l.o.g.}{without loss of generality}
\newac{mle}{MLE}{maximum likelihood estimation}
\newac{rssi}{RSSI}{received signal strength indicator}
\newac{mimo}{MIMO}{multiple-input multiple-output}
\newac{csi}{CSI}{channel state information}
\newac{fdd}{FDD}{frequency division duplexing}
\newac{ms}{MS}{mobile station}
\newac{bs}{BS}{base station}
\newac{bss}{BSs}{base stations}
\newac{d2d}{D2D}{device-to-device}
\newac{slnr}{SLNR}{signal-to-interference-leakage-and-noise-ratio}
\newac{ula}{ULA}{uniform linear antenna array}
\newac{pas}{PAS}{power angular spectrum}
\newac{mmse}{MMSE}{minimum mean square error}
\newac{zf}{ZF}{zero-forcing}
\newac{rzf}{RZF}{regularized zero-forcing}
\newac{as}{AS}{angular spread}
\newac{ue}{UE}{user equipment}
\newac{ues}{UEs}{user equipments}
\newac{iid}{i.i.d.}{independent and identically distributed} 
\newac{sinr}{SINR}{signal-to-interference-and-noise ratio}
\newac{tdd}{TDD}{time-division duplex}
\newac{rvq}{RVQ}{random vector quantization}
\newac{rhs}{R.H.S.}{right hand side}
\newac{mrc}{MRC}{maximum ratio combining}
\newac{cdf}{CDF}{cumulative distribution function}
\newac{a.s.}{a.s.}{almost surely}
\newac{los}{LOS}{line-of-sight}
\newac{jsdm}{JSDM}{joint spatial division and multiplexing}
\newac{map}{MAP}{maximum a posteriori}
\newac{klt}{KLT}{Karhunen-Lo\`eve Transform}
\newac{lbe}{LBE}{link bargaining equilibrium}
\newac{se}{SE}{Stackelberg equilibrium}
\newac{uav}{UAV}{unmanned aerial vehicle}
\newac{uavs}{UAVs}{unmanned aerial vehicles}
\newac{nlos}{NLOS}{non-line-of-sight}
\newac{pdf}{PDF}{probability density function}
\newac{em}{EM}{expectation-maximization}
\newac{knn}{KNN}{$k$-nearest neighbors}
\newac{svd}{SVD}{singular value decomposition}
\newac{nmf}{NMF}{non-negative matrix factorization}
\newac{umf}{UMF}{unimodality-constrained matrix factorization}
\newac{rmse}{RMSE}{rooted mean squared error}
\newac{olos}{OLOS}{obstructed line-of-sight}
\newac{mmw}{mmW}{millimeter wave}
\newac{ber}{BER}{bit error rate}
\newac{rss}{RSS}{received signal strength}
\newac{lp}{LP}{linear program}
\newac{ufw}{U-FW}{unimodal Frank-Wolfe}
\newac{utf}{UTF}{unimodality-constrained tensor factorization}
\newac{fw}{FW}{Frank-Wolfe}
\newac{iot}{IoT}{Internet-of-Things}
\newac{mae}{MAE}{mean absolute error}
\newac{crb}{CRB}{Cram\'er-Rao bound}
\newac{aoa}{AoA}{angle of arrival}
\newac{wcl}{WCL}{weighted centroid localization}
\newac{slf}{SLF}{spatial loss function}
\newac{btd}{BTD}{block-term tensor decomposition}
\newac{tps}{TPS}{thin plate spline}
\newac{nmse}{NMSE}{normalized mean squared error}
\newac{svt}{SVT}{singular value thresholding}
\newac{vae}{VAE}{variational autoencoder}
\newac{gan}{GAN}{generative adversarial network}
\newac{aod}{AOD}{angle of departure}
\newac{rbf}{RBF}{radial basis function}
\newac{loocv}{LOOCV}{leave one out cross validation}
\newac{lpr}{LPR}{local polynomial regression}
\newac{xl-mimo}{XL-MIMO}{extremely large multiple-input multiple-output}
\newac{6g}{6G}{sixth-generation}
\newac{cpd}{CPD}{canonical polyadic decomposition}
\newac{nnm}{NNM}{nuclear norm minimization}
\newac{mimo-ofdm}{MIMO-OFDM}{multiple-input multiple-output orthogonal frequency division multiplexing}
\newac{mimo-cdma}{MIMO-CDMA}{multiple-input multiple-output code division multiple access}
\newac{pad}{PAD}{power-angular-delay}
\newac{hmm}{HMM}{hidden Markov model}
\newac{dft}{DFT}{discrete time Fourier transform}
\newac{pnp}{PnP}{Plug-and-Play}
\newac{ode}{ODE}{ordinary differential equation}
\newac{gps}{GPs}{Gaussian processes}
\newac{ai}{AI}{artificial intelligence}
\newac{2d}{2D}{two-dimensional}
\newac{ddpm}{DDPM}{denoising diffusion probabilistic modeling}

\setkeys{Gin}{width=1.0\columnwidth}

\providecommand{\propositionname}{Proposition}
\providecommand{\remarkname}{Remark}

\newac{uaps}{UAPS}{utility-aware path search}
\newac{hqs}{HQS}{half quadratic splitting}
\newac{psnr}{PSNR}{peak signal-to-noise ratio}

\newac{tsp}{TSP}{traveling salesperson problem}
\newac{tfmpnp}{TFM-PnP}{truncated flow matching plug-and-play}
\makeatother

\providecommand{\propositionname}{Proposition}

\begin{document}
	\title{Active Learning for Low-Altitude Radio Map Construction via Plug-and-Play
		Flow Matching}
	\maketitle
	\begin{abstract}
		The deployment of unmanned aerial vehicles (UAVs) in low-altitude
		airspace requires accurate and timely radio maps for reliable
		communication and safe navigation. However, constructing such radio maps
		is challenging due to the prohibitive overhead of exhaustive measurements
		and the limited flight endurance of UAVs. To address this challenge, we propose an active learning framework
		based on flow matching for efficient low-altitude radio map construction
		from sparse measurements. We first analyze a plug-and-play (PnP) inference scheme with a flow-matching prior. By characterizing the late-stage refinement behavior through an ordinary differential equation (ODE), we theoretically show how the inference steps smoothly align with a continuous ODE flow to refine the map details. Recognizing that the early generative stages are largely noise-dominated, this insight motivates our proposed truncated flow matching plug-and-play (TFM-PnP) approach. TFM-PnP utilizes a spatial interpolation-based initialization to start the reconstruction from an intermediate flow time, thereby bypassing the inefficient early stages. We further use the generative diversity of flow matching to derive an uncertainty map to guide the UAV trajectory design. Specifically, we propose a weighted sampling approach to select a target location, followed by a Utility-Aware Path Search (UAPS) algorithm to design the corresponding UAV trajectories. Simulation results based on Sionna ray-tracing datasets
		show that the proposed framework outperforms the considered
		baselines, achieving more than $50$\% reduction in normalized mean squared error
		(NMSE).
	\end{abstract}
	\begin{IEEEkeywords}
		Active learning, flow matching, ODE, plug-and-play, radio map, truncated, UAVs, uncertainty.
	\end{IEEEkeywords}

	\section{Introduction}

	The rapid proliferation of \ac{uavs} for autonomous delivery \cite{HosNasTar:J16},
	aerial monitoring \cite{WanHuaShaGao:J25}, and wireless service provisioning
	\cite{WuMWuHLuW:J25} is transforming low-altitude operations from
	isolated, mission-specific flights into coordinated, infrastructure-supported
	activities across both urban and rural areas. However, the low-altitude
	airspace is characterized by a dynamic and heterogeneous radio
	environment, where obstacles such as buildings, vegetation, and moving
	objects lead to severe scattering and blockage. As a result, the wireless
	channel conditions for UAV links can vary sharply even over short
	distances, posing challenges for maintaining reliable and
	continuous connectivity.

	To address this issue, radio maps have emerged as an effective means
	for characterizing and predicting spatial variations in channel quality
	\cite{SunChe:J24b,XuLCheChePuW:J26,SunChe:J24}. A radio map offers
	fine-grained spatial information about channel characteristics, such
	as the \ac{rss}, effectively serving as a digital twin of the radio
	environment \cite{TimShrFux:J24,ZenCheXuWu:J24,SunChe:J22,LiuYuGao:J25,ShrFuHong:J22}.
	Such knowledge is essential for maintaining robust command-and-control
	links, optimizing spectrum utilization, and ensuring safe navigation
	and trajectory planning, particularly in complex areas where signal
	propagation is highly irregular \cite{MoxHuaXuj:J19,CheLiBSunCui:J25,LiBChe:J24}.
	Nevertheless, constructing accurate low-altitude radio maps remains
	challenging. The need to cover wide areas with UAVs of limited
	battery life makes dense measurements impractical in both time and
	cost. Therefore, developing intelligent sampling and reconstruction
	methods that can generate accurate radio maps from a limited number
	of strategically collected measurements remains a critical problem.

Existing approaches for intelligent sampling and reconstruction can 
be categorized into two groups. Traditional methods leverage statistical 
models such as \ac{gps} to interpolate sparse data. Although GPs can quantify 
predictive uncertainty and thereby guide subsequent sampling 
\cite{CheZhuWanLin:C25,PolSadYeW:J24}, they do not scale well and often 
struggle in complex blockage-dominated scenarios. More recent studies 
have used deep learning techniques \cite{CheWanGuo:J25,LuWGaoWenLia:C25,ShrRomChe:J23}, 
including autoencoders and U-Nets, to learn latent structures of the 
map from data. Unlike GPs, these models do not inherently provide 
calibrated uncertainty estimates. To address this issue, researchers 
have introduced auxiliary uncertainty-estimation methods, such as Bayesian 
neural networks or deep ensembles \cite{LuWGaoWenLia:C25,ShrRomChe:J23}. 
However, these methods are computationally expensive and can fail to 
yield reliable uncertainty estimates because they do not model prediction 
uncertainty explicitly. Furthermore, existing active sensing methods typically 
update the uncertainty estimate after each newly acquired measurement and then 
determine the next sensing action according to a local uncertainty or acquisition 
criterion, sometimes with additional distance-based heuristics. For example, 
\cite{PolSadYeW:J24} queried one new location at a time via acquisition functions 
with an optional distance penalty, whereas \cite{ShrRomChe:J23} replanned 
toward a single high-uncertainty destination using an uncertainty-aware route. 
These methods either do not explicitly incorporate flight distance into target 
location selection or require uncertainty re-estimation and replanning after 
acquiring each new sensing location.

	In recent years, the emergence of generative \ac{ai} has provided
	new tools for modeling complex data distributions. Early
	progress has been achieved by diffusion models \cite{HoJJaiAbb:J21,sohldickstein2015deepunsupervisedlearningusing,song2021scorebasedgenerativemodelingstochastic},
	which enable restoration through iterative inference with gradient-based
	measurement guidance \cite{mardani2023variationalperspectivesolvinginverse,song2022solvinginverseproblemsmedical}.
	Although diffusion models are effective, their inference typically requires
	tens to hundreds of discrete denoising steps, each involving gradient
	evaluations or forward passes through the score network, leading
	to substantial computational overhead. In contrast, flow matching
	\cite{LipCheBenNic:J23,liu2022flowstraightfastlearning} learns a
	continuous-time velocity field that transports a simple base distribution
	to the data distribution. Although sampling still involves numerically
	integrating this velocity field, it avoids the multi-level noise scheduling
	and iterative denoising structure of diffusion models, resulting in
	a more direct and often more efficient generative process.

	Recent studies on flow matching-based restoration have explored various
	strategies for incorporating sparse measurement information into the
	learned dynamics to reconstruct complete images. Some works \cite{kim2025flowdpsflowdrivenposteriorsampling,pokle2024trainingfreelinearimageinverses}
	injected the measurement likelihood directly into the inference stage
	by adjusting the velocity field, enabling efficient posterior sampling
	without retraining. Other works \cite{benhamu2024dflowdifferentiatingflowscontrolled,zhang2025flowpriorslinearinverse}
	coupled the measurement-consistency constraint explicitly with the
	flow trajectory through solver differentiation or time-dependent \ac{map}
	optimization, achieving high reconstruction accuracy at the cost of
	increased numerical sensitivity. Alternatively, the flow matching
	prior can be incorporated into iterative \ac{pnp} schemes \cite{MarGagHag:J25},
	where it is implemented as a time-dependent flow-induced prior operator that alternates with measurement
	alignment and constraint projection. However, many of these designs integrate measurement guidance throughout the full generative trajectory or rely on tightly coupled time-dependent optimization during inference. Under sparse observations, such whole-trajectory correction can be inefficient and numerically delicate, particularly in the noise-dominated early stage.
	Moreover, existing flow-based approaches have primarily focused on generation and reconstruction. Their potential for explicit uncertainty characterization remains underexplored, especially in UAV-assisted radio map construction with trajectory-aware active sensing.
	\begin{figure*}
		\includegraphics[width=1\textwidth]{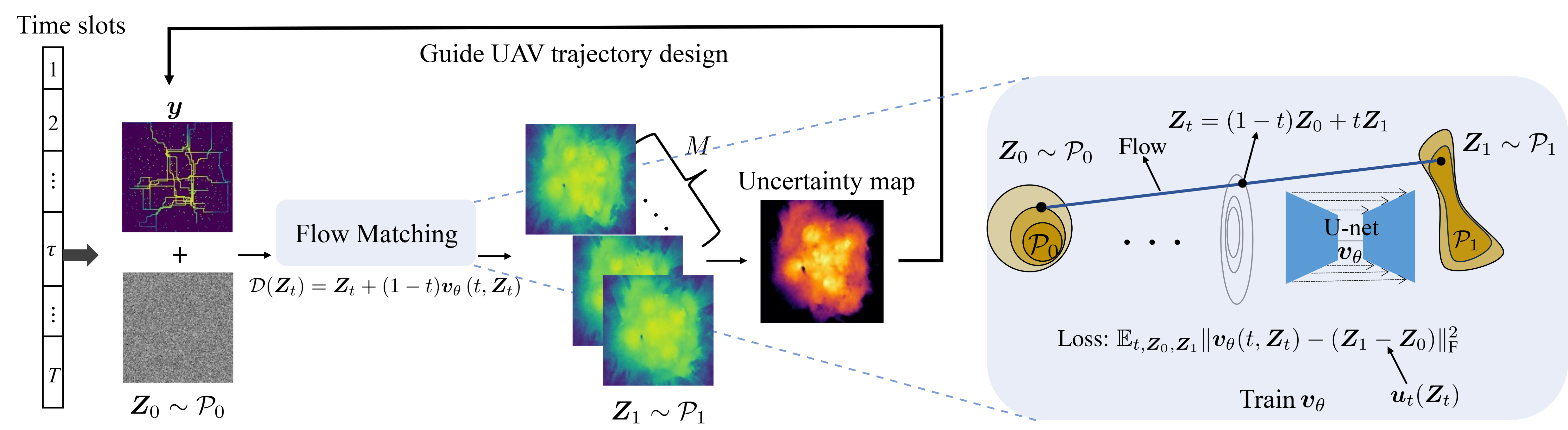}\caption{\label{fig:Active-learning-process}Flow matching-based active learning
			process for radio map construction.}
	\end{figure*}

	To address the above issues, this paper develops an uncertainty-aware
	active learning framework for UAV-assisted low-altitude radio map
	construction from sparse UAV measurements. Our goal is threefold:
	1) to develop a stable and efficient flow matching-based reconstruction method; 2) to quantify the uncertainty of the reconstructed radio map; and 3) to use this uncertainty for sequential UAV trajectory design under a limited sensing budget. 
	To this end, we develop a flow-based active learning framework
	(cf. Fig.~\ref{fig:Active-learning-process}) that tightly couples
	stable reconstruction with UAV trajectory planning through an uncertainty
	map derived from the reconstructed radio maps. Specifically, the framework
	first incorporates a flow matching prior into a \ac{pnp} reconstruction
	scheme to produce measurement-consistent radio maps. It then derives
	pixel-wise uncertainty using the generative variability of
	flow matching. Finally, it uses the resulting uncertainty map to plan
	sequential UAV trajectories that maximize the uncertainty reward under a limited flight
	budget.

	The main contributions of this work are summarized as follows:
	\begin{itemize}
		\item We theoretically analyze a \ac{pnp} inference scheme with a flow-matching prior. By modeling the late-stage refinement behavior through an \ac{ode}, we reveal the underlying refinement dynamics. Specifically, we show how the discrete inference steps smoothly align with a continuous \ac{ode} flow to recover map details. Motivated by this insight, we propose the \ac{tfmpnp} approach. By utilizing a spatial interpolation-based initialization, \ac{tfmpnp} starts the reconstruction from an intermediate flow time and performs late-stage flow-guided refinement toward \(t=1\).
		\item We develop a sample-based uncertainty estimation scheme that connects \ac{tfmpnp} reconstruction with active UAV trajectory planning. By generating multiple radio map reconstructions through repeated runs of \ac{tfmpnp}, the proposed method estimates spatial uncertainty via pixel-wise variance. The resulting uncertainty map serves as the basis for uncertainty-guided sensing.
		\item We further design a UAV trajectory-planning strategy that accounts for both uncertainty reward and flight cost. In particular, at each replanning round, we propose a weighted sampling approach to select a target by maximizing the uncertainty reward under a limited flight budget. Subsequently, \ac{uaps} generates a trajectory from the UAV's current location to the target location.
		\item Experimental results on Sionna ray-tracing datasets show that the proposed \ac{tfmpnp}-based active learning framework outperforms representative baselines, yielding more than $50\%$ \ac{nmse} reduction.
	\end{itemize}

	The remainder of the paper is organized as follows. Section~\ref{sec:Problem-Statement} formulates the UAV-assisted radio map reconstruction problem. Section~\ref{sec:Preliminaries} reviews the PnP and flow matching preliminaries. Section~\ref{sec:Time-Truncated-Gradient}  analyzes the asymptotic ODE limit of the PnP-Flow iteration and presents the proposed TFM-PnP reconstruction algorithm. Section~\ref{sec:Active-Learning-via} develops the uncertainty-aware active learning module. Section~\ref{sec:Simulation-Results} presents the simulation results, and Section~\ref{sec:Conclusion} concludes the paper.

	\emph{Notation:} Scalars are denoted by italic letters (e.g., $x$). 
	Bold lowercase letters denote vectors or grid locations (e.g., $\bm{x}$ and $\bm{g}$), 
	and bold uppercase letters denote matrices or grid-valued maps (e.g., $\bm{X}$). 
	The entry at the $i$-th row and $j$-th column of matrix $\bm{X}$ is denoted by $\bm{X}_{[i,j]}$. 
	Calligraphic letters are used for sets, operators, distributions, or functionals, depending on the context; such objects are explicitly defined when introduced. 
	$\mathbb{R}$ denotes the set of real numbers, $\bm I$ denotes an identity matrix with compatible dimension, and $\mathbb{E}[\cdot]$ denotes expectation. 
	$\mathcal{N}(\bm{\mu},\bm{\Sigma})$ denotes a Gaussian distribution with mean $\bm{\mu}$ and covariance $\bm{\Sigma}$. 
	The notation $U[a,b]$ denotes the continuous uniform distribution on the interval $[a,b]$, and $|\mathcal A|$ denotes the cardinality of set $\mathcal A$. 
	Unless otherwise specified, $\|\cdot\|$ denotes the Euclidean norm for vectors and the Frobenius norm for matrices; $\|\cdot\|_{\mathrm F}$ is used when the matrix norm needs to be emphasized.

	\section{Problem Statement\label{sec:Problem-Statement}}

	\subsection{Measurement Model}

	Consider a low-altitude urban environment where multiple \ac{bss}
	are deployed, either at ground level or on building rooftops. The
	objective is to estimate the spatial distribution of the aggregated \ac{rss}
	over a low-altitude \ac{2d} horizontal area \(\mathcal{S}\subset\mathbb{R}^{2}\)
	from sparse RSS measurements collected by a UAV. This area is discretized into a uniform grid of size
	$I\times J$, and the RSS value at each grid index $(i,j)$ is
	represented by the entry \(\bm H_{[i,j]}\), where \(\bm H \in \mathbb R^{I\times J}\) is the unknown ground-truth radio map.

	A UAV collects RSS measurements while flying over the area. Let \(\Omega^{(\tau)} \subseteq \{1,\dots,I\}\times\{1,\dots,J\}\) denote the set of grid cells visited by the UAV up to time slot \(\tau\), and let \(Q_\tau = |\Omega^{(\tau)}|\) denote the number of sampled locations. For each sampled cell $(i_q, j_q) \in \Omega^{(\tau)}$, the measurement is modeled as $y_q^{(\tau)} = \bm{H}_{[i_q, j_q]} + n_q$, where $n_q \sim \mathcal{N}(0, \sigma_y^2)$ denotes the measurement noise associated with the $q$-th sample.

	Stacking all measurements gives $\bm{y}^{(\tau)} = \mathcal{H}_{\Omega^{(\tau)}}(\bm{H}) + \bm{n}^{(\tau)}$, where $\bm{n}^{(\tau)} \sim \mathcal{N}(\bm{0}, \sigma_y^2 \bm{I})$ and $\mathcal{H}_{\Omega^{(\tau)}}(\cdot) : \mathbb{R}^{I \times J} \to \mathbb{R}^{Q_\tau}$ extracts the entries indexed by $\Omega^{(\tau)}$. The adjoint operator 
	$\mathcal H_{\Omega^{(\tau)}}^{*}(\cdot): \mathbb R^{Q_\tau}\to \mathbb R^{I\times J}$
	maps a measurement-domain vector back to the grid domain by placing its
	entries at the sampled locations in $\Omega^{(\tau)}$ and filling zeros
	elsewhere.

	The objective is to recover the complete radio map \(\bm H\) from sparse measurements \(\bm y^{(\tau)}\) and their sampling locations \(\Omega^{(\tau)}\).

	\subsection{Active Learning Framework}

	The UAV-assisted radio map reconstruction is performed sequentially over discrete time slots \(\tau=1,\dots,T\). At each time slot \(\tau\), all measurements collected up to time \(\tau\), denoted by \(\bm y^{(\tau)}\), are used to generate a set of radio map reconstruction samples \(\{\bm Z^{(\tau,m)}\}_{m=1}^M\). Based on these samples, an uncertainty map is computed to quantify the confidence of the estimated RSS values over the entire grid. The resulting uncertainty map is then used to select an informative target and plan the UAV trajectory for the next time slot \(\tau+1\), so that future measurements are preferentially acquired from informative regions. This closed-loop process of reconstruction, uncertainty quantification, and trajectory planning is repeated until the flight budget is exhausted.

	As illustrated in Fig.~\ref{fig:Active-learning-process}, the proposed framework tightly couples data-driven radio map reconstruction with uncertainty-aware measurement acquisition, enabling efficient and accurate radio map construction from sparse UAV observations.

	\section{Preliminaries\label{sec:Preliminaries}}

	This section briefly reviews the standard \ac{pnp} framework for solving inverse problems, along with the theoretical foundations of flow matching generative models. We present both in the context of radio map construction. Our primary focus lies in the generative inference of a single reconstructed radio map $\bm{Z}^{(\tau,m)}$, at a given time slot $\tau$. For notational simplicity, the dependence on both the time slot index $\tau$ and the sample index $m$ is omitted.
\subsection{Radio Map Construction via a PnP Framework}

We first introduce the standard PnP framework for radio map construction.
This inverse problem can be formulated as a \ac{map} estimation problem
\begin{equation}
	\max_{\bm{Z}}\ \left\{ \log f(\bm{y}|\bm{Z})+\log p(\bm{Z})\right\},
	\label{eq:map_formulation}
\end{equation}
where \(f(\bm{y}|\bm{Z})\) denotes the likelihood of the measurements
\(\bm{y}\) given the radio map \(\bm{Z}\), and \(p(\bm{Z})\) is the prior
distribution of radio maps.

Equivalently, \eqref{eq:map_formulation} can be reformulated as the following
regularized inverse problem:
\begin{equation}
	\min_{\bm{Z}}\ \left\{ F(\bm{Z};\bm{y})+R(\bm{Z})\right\},
	\label{eq:regularized_problem}
\end{equation}
where
\(F(\bm{Z};\bm{y})\coloneqq-\log f(\bm{y}|\bm{Z})\) is the measurement
consistency term, and \(R(\bm{Z})\coloneqq-\log p(\bm{Z})\) encodes the
structural prior of radio maps. 	A common choice for the measurement consistency term is $F(\bm{Z};\bm{y})=\frac{1}{2}\|\mathcal{H}_\Omega(\bm{Z})-\bm{y}\|^{2}$.

Given the minimization problem in \eqref{eq:regularized_problem}, the proximal
gradient method \cite{BoyParChu:B11} solves it by alternating between a measurement-consistency step
and a prior-consistency step:
\begin{equation}
	\left\{
	\begin{aligned}
		\bar{\bm{Z}}^{(k+1)}
		&=
		\bm{Z}^{(k)}
		-
		\eta_k
		\nabla_{\bm{Z}} F(\bm{Z}^{(k)};\bm{y}), \\
		\bm{Z}^{(k+1)}
		&=
		\operatorname{prox}_{\eta_k R}
		\left(
		\bar{\bm{Z}}^{(k+1)}
		\right),
	\end{aligned}
	\right.
	\label{eq:proximal_gradient}
\end{equation}
where \(\eta_k\) is the step size. The proximal operator is defined as
\begin{equation}
	\operatorname{prox}_{\eta_k R}(\bar{\bm{Z}})
	=
	\arg\min_{\bm{Z}}
	R(\bm{Z})
	+
	\frac{1}{2\eta_k}
	\left\|
	\bm{Z}-\bar{\bm{Z}}
	\right\|_F^2 .
	\label{eq:proximal_operator}
\end{equation}

However, the explicit form of the radio-map prior \(R(\bm{Z})\) is generally
unknown and difficult to model analytically. Following the plug-and-play
principle, we use a learned prior operator in place of the explicit proximal map in \eqref{eq:proximal_gradient}:
\begin{equation}
	\bm{Z}^{(k+1)}
	=
	\mathcal{D}_{\theta,k}
	\left(
	\bar{\bm{Z}}^{(k+1)}
	\right),
	\label{eq:pnp_operator}
\end{equation}
where \(\mathcal{D}_{\theta,k}(\cdot)\) serves as an implicit prior model for
radio maps.

In this work, we instantiate the learned prior operator $\mathcal{D}_{\theta,k}$ using a generative prior trained via flow matching, as specified in the following subsection.

	\subsection{Flow Matching}

	Flow matching \cite{LipCheBenNic:J23} is an emerging paradigm for
	generative modeling that formulates sample generation as learning
	a continuous-time probability flow governed by an \ac{ode}. Unlike
	conventional diffusion-based generative models, flow matching avoids
	costly simulation during training by reformulating the objective into
	a scalable regression problem.

	\subsubsection{Probability Flows and Velocity Fields}

	Let $\mathcal{P}_{0}$ denote a simple prior distribution, typically
	a standard Gaussian distribution \cite{LipCheBenNic:J23}, and let $\mathcal{P}_{1}$
	denote the target distribution of radio maps over the space $\mathbb{R}^{I\times J}$.
	Note that in \eqref{eq:map_formulation}, $p(\bm{Z})$ refers to the
	probability density function associated with $\mathcal{P}_{1}$.

	Continuous normalizing flows construct a probability path $\{\mathcal{P}_{t}\}_{t\in[0,1]}$
	that continuously interpolates between these two distributions, transforming
	samples from the prior $\mathcal{P}_{0}$ at time $t=0$ to the target
	distribution $\mathcal{P}_{1}$ at time $t=1$. This probability path
	is generated by a corresponding time-dependent velocity field $\bm{u}_{t}(\bm{Z}_{t})$
	defined as:
	\begin{equation}
		\bm{u}_{t}:[0,1]\times\mathbb{R}^{I\times J}\to\mathbb{R}^{I\times J}, \quad (t,\bm{Z}_{t}) \to \bm{u}_{t}(\bm{Z}_{t}),
	\end{equation}
	where each time $t$ corresponds to a distinct continuous vector field.

	The evolution of samples along this probability path is governed by
	an \ac{ode}:
	\begin{equation}
		\frac{\mathrm{d}}{\mathrm{d}t}\psi_{t}(\bm{Z}_{0})=\bm{u}_{t}(\psi_{t}(\bm{Z}_{0})),
	\end{equation}
	with the initial condition $\psi_{0}(\bm{Z}_{0})=\bm{Z}_{0}$ for any $\bm{Z}_{0}\sim\mathcal{P}_{0}$.
	Solving this \ac{ode} yields a deterministic flow map $\psi_{t}$ that
	transports samples along the trajectories defined by the velocity
	field. Consequently, flowing a noise sample $\bm{Z}_{0}$
	via this map yields the latent state at time $t$:
	\begin{equation}
		\bm{Z}_{t}\coloneqq\psi_{t}(\bm{Z}_{0})\sim\mathcal{P}_{t}.
	\end{equation}
	By integrating the velocity field from $t=0$ to $t=1$, the simple
	prior $\mathcal{P}_{0}$ is smoothly and deterministically transported
	to the complex radio map distribution $\mathcal{P}_{1}$.

	\subsubsection{Learning the Velocity Field}

	In practice, the true velocity field $\bm{u}_{t}(\bm{Z}_{t})$ is analytically intractable and must be approximated by a neural network $\bm{v}_{\theta}(t,\bm{Z}_{t})$. To make training scalable, the flow matching framework employs conditional probability paths. A widely used choice is the conditional Optimal Transport (OT) path \cite{LipCheBenNic:J23}, formulated as 
	\begin{equation}
		\bm{Z}_{t}=(1-t)\bm{Z}_{0}+t\bm{Z}_{1} \label{eq:ot_path}
	\end{equation}
	with a constant conditional velocity 
		\begin{equation}
	\bm{u}_{t}(\bm{Z}_{t}\mid\bm{Z}_{1}, \bm{Z}_{0})=\bm{Z}_{1}-\bm{Z}_{0},
		\end{equation}
	given $\bm{Z}_{0}\sim\mathcal{P}_{0}$ and $\bm{Z}_{1}\sim\mathcal{P}_{1}$. A simple illustration of the training process is shown on
	the right side of Fig.~\ref{fig:Active-learning-process}.

	The network $\bm{v}_{\theta}(t,\bm{Z}_{t})$, typically based on a U-Net architecture, is trained by minimizing the Conditional Flow Matching (CFM) loss:
	\begin{align}
		\mathcal{L}_{\text{CFM}}(\theta)=\mathbb{E}_{t,\bm{Z}_{0},\bm{Z}_{1}}\left\Vert \bm{v}_{\theta}\!\big(t,(1-t)\bm{Z}_{0}+t\bm{Z}_{1}\big)-(\bm{Z}_{1}-\bm{Z}_{0})\right\Vert ^{2},\label{eq:cfm_loss}
	\end{align}
	where $t\sim U[0,1]$. Minimizing this surrogate objective has been shown to recover the desired marginal velocity field $\bm{u}_{t}(\bm{Z}_{t})$ \cite{LipCheBenNic:J23}. Once the network is trained, new radio map samples are generated by numerically simulating the ODE $\frac{\mathrm{d}\bm{Z}_{t}}{\mathrm{d}t}=\bm{v}_{\theta}(t,\bm{Z}_{t})$ from a noise prior.
\subsection{PnP using Flow Matching as a Prior}

In the flow-matching implementation, this operator is indexed by the flow
time, i.e., \(\mathcal{D}_{\theta,k}=\mathcal D_{\theta,t_k}\). For
the affine conditional OT path in \eqref{eq:ot_path}, it is specified as
\[
\mathcal D_{\theta,t}(\bm W)
=
\bm W+(1-t)\bm v_{\theta}(t,\bm W).
\]
This expression gives an endpoint estimate at \(t=1\) from a state
\(\bm W\) at flow time \(t\). Thus, \(\mathcal D_{\theta,t}\) is a
time-indexed learned prior update used in the PnP iteration, rather than
the exact proximal map of \(R(\bm Z)\) or a complete numerical integrator
of the flow ODE.

By integrating the measurement-consistency update with the flow-matching prior, the resulting PnP-Flow \cite{MarGagHag:J25} iteration for radio map construction is formulated as:
\begin{equation}
	\left\{
	\begin{aligned}
		\bar{\bm{Z}}^{(k+1)}
		&=
		\bm{Z}^{(k)}
		-
		\eta_k
		\nabla_{\bm{Z}}F(\bm{Z}^{(k)};\bm{y}), \\
		\bm{W}^{(k+1)}
		&=
		t_k\bar{\bm{Z}}^{(k+1)}
		+
		(1-t_k)\bm{Z}_0^{(k)},  \\
		\bm{Z}^{(k+1)}
		&=
		\mathcal{D}_{\theta,t_k}
		\left(
		\bm{W}^{(k+1)}
		\right),
	\end{aligned}
	\right.
	\label{eq:pnp_flow_iteration}
\end{equation}
where \(t_k\in[0,1)\) is a monotonically increasing time schedule corresponding to the flow evolution, with \(t_0=0\) and \(\lim_{k\rightarrow\infty}t_k=1\).

The three-step iteration in \eqref{eq:pnp_flow_iteration} has a direct interpretation. The first step takes a gradient step to pull the current estimate toward measurement consistency. The second step reparameterizes the intermediate estimate into the affine form of the conditional optimal transport path defined in \eqref{eq:ot_path}. Finally, the third step applies \(\mathcal D_{\theta,t_k}(\cdot)\) to produce an endpoint estimate along the learned flow path. In this sense, the learned velocity field provides an implicit prior-refinement mechanism without requiring an explicit analytic regularizer \(R(\bm Z)\).

\section{The Proposed TFM-PnP Approach for Radio Map Reconstruction\label{sec:Time-Truncated-Gradient}}

In this section, we present the TFM-PnP reconstruction algorithm for sparse UAV measurements. We begin with an idealized ODE interpretation of the late-stage PnP-Flow iteration, corresponding to \(t_k\to1\), which identifies the late-stage dynamics as a measurement-consistent refinement process under a stochastic-approximation scaling. This interpretation motivates the use of a truncated flow trajectory initialized from a coarse spatial prior. We then construct the time-truncated initialization and summarize the resulting finite-step inference procedure.

\subsection{Asymptotic ODE Limit of PnP-Flow}

We first examine the late-stage PnP-Flow recursion in
\eqref{eq:pnp_flow_iteration}, where \(t_k\to1\). In this regime, the factor \(h_k=1-t_k\) becomes the small coefficient associated with the endpoint prediction. Under a stochastic-approximation scaling, the resulting late-stage recursion admits the following ODE-method characterization.

\begin{prop}[Asymptotic ODE Limit of PnP-Flow]
	\label{prop:ode_limit}
	Fix a finite index \(k_0\) and consider the sequence of the
	PnP-Flow recursion in \eqref{eq:pnp_flow_iteration} for \(k\ge k_0\).
	Let \(h_k\triangleq 1-t_k\) and
	\(\beta_k\triangleq t_k\eta_k/h_k\), where \(\eta_k\) denotes the
	gradient step size. Assume that: (i)
	\(\sum_{k=k_0}^{\infty} h_k=\infty\)
	and \(\sum_{k=k_0}^{\infty} h_k^2<\infty\); (ii)
	\(F(\bm Z; \bm y)=\frac{1}{2}\|\mathcal{H}_{\Omega}(\bm Z)-\bm y\|^2\),
	\(F\in C^1\), and \(\nabla F\) is locally Lipschitz; (iii)
	\(\bm v_{\theta}(t,\bm Z)\) is continuous in \(t\in[0,1]\) and
	globally Lipschitz in \(\bm Z\) uniformly in \(t\); (iv)
	\(\beta_k\to\beta\) for some \(\beta\in[0,\infty)\); (v) the
	reference samples \(\{\bm Z_0^{(k)}\}\) are independent of the past,
	have zero mean, and have uniformly bounded second moments; and (vi)
	the iterate sequence \(\{\bm Z^{(k)}\}_{k\ge k_0}\) is almost surely bounded.

	Let \(s_{k_0}=0\), \(s_{k+1}=s_k+h_k\) for \(k\ge k_0\), and let
	\(\widetilde{\bm Z}(s)\) be the piecewise linear interpolation of
	the discrete iterates over the intervals \([s_k,s_{k+1})\). Then
	\(\widetilde{\bm Z}(s)\) is an asymptotic pseudo-trajectory of the
	autonomous ODE
	\begin{equation}
		\frac{d\bm Z(s)}{ds}
		=
		-\bm Z(s)
		-
		\beta\nabla F(\bm Z(s))
		+
		\bm v_{\theta}(1,\bm Z(s)).
		\label{eq:asymptotic_ode}
	\end{equation}
\end{prop}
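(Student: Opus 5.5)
The plan is to rewrite the late-stage recursion in \eqref{eq:pnp_flow_iteration} as a Robbins--Monro stochastic-approximation scheme with step sizes \(h_k\). The target form is a mean field equal to the right-hand side of \eqref{eq:asymptotic_ode}, plus a martingale-difference noise and a vanishing bias. We then invoke the standard ODE-method result of Benaïm (asymptotic pseudo-trajectories, Prop.~4.1/4.2 in Benaïm's 1999 lecture notes). That result states that bounded iterates, a globally Lipschitz mean field, steps with \(\sum h_k=\infty\) and \(h_k\to0\), an \(L^2\)-convergent noise series and an asymptotically vanishing bias together imply that the interpolated process is an asymptotic pseudo-trajectory.

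\textbf{Step 1 (algebraic rewriting).} Substitute the three lines of \eqref{eq:pnp_flow_iteration} into one another, using \(\mathcal D_{\theta,t}(\bm W)=\bm W+(1-t)\bm v_\theta(t,\bm W)\), \(t_k=1-h_k\) and \(t_k\eta_k=\beta_k h_k\). This gives
\begin{equation*}
\bm Z^{(k+1)}-\bm Z^{(k)} = h_k\Big[G(\bm Z^{(k)}) + \bm Z_0^{(k)} + \bm b_k\Big],
\end{equation*}
where \(G(\bm Z)=-\bm Z-\beta\nabla F(\bm Z)+\bm v_\theta(1,\bm Z)\). The bias is
\begin{equation*}
\bm b_k=(\beta-\beta_k)\nabla F(\bm Z^{(k)})+\big[\bm v_\theta(t_k,\bm W^{(k+1)})-\bm v_\theta(1,\bm Z^{(k)})\big].
\end{equation*}
Since \(\nabla F(\bm Z)=\mathcal H_\Omega^*(\mathcal H_\Omega(\bm Z)-\bm y)\) is affine, \(G\) is globally Lipschitz by (iii). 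Hence the ODE \eqref{eq:asymptotic_ode} has unique global solutions and a well-defined flow, as required by Benaïm's framework.

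\textbf{Step 2 (noise).} By (v), \(\bm Z_0^{(k)}\) is a martingale difference sequence with respect to the natural filtration, with \(\sup_k\mathbb E\|\bm Z_0^{(k)}\|^2<\infty\). Combined with \(\sum h_k^2<\infty\) from (i), the martingale \(\sum_k h_k\bm Z_0^{(k)}\) is bounded in \(L^2\) and therefore converges a.s. This gives Benaïm's noise condition: for every \(T>0\), the supremum over \(\{n: s_n\le s_k+T\}\) of \(\|\sum_{i=k}^{n-1}h_i\bm Z_0^{(i)}\|\) tends to \(0\). The same summability gives \(\sum_k\mathbb E\|h_k\bm Z_0^{(k)}\|^2<\infty\), so \(h_k\bm Z_0^{(k)}\to0\) a.s.

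\textbf{Step 3 (bias).} We show \(\bm b_k\to0\) a.s.; this is the step I expect to be the main obstacle. The difficulty is that the random reference sample enters nonlinearly through \(\bm v_\theta(t_k,\bm W^{(k+1)})\), so it cannot simply be absorbed into the martingale term. The first term vanishes because, by (vi), the iterates lie in a random compact set \(K\), on which \(\nabla F\) is bounded, and \(\beta_k\to\beta\) by (iv). For the second term, split
\begin{equation*}
\|\bm v_\theta(t_k,\bm W^{(k+1)})-\bm v_\theta(1,\bm Z^{(k)})\|\le L\|\bm W^{(k+1)}-\bm Z^{(k)}\|+\|\bm v_\theta(t_k,\bm Z^{(k)})-\bm v_\theta(1,\bm Z^{(k)})\|,
\end{equation*}
where \(L\) is the uniform Lipschitz constant from (iii).
\begin{itemize}
\item Lipschitz part: \(\bm W^{(k+1)}-\bm Z^{(k)}=h_k\big(-\bm Z^{(k)}-\beta_k\nabla F(\bm Z^{(k)})\big)+h_k\bm Z_0^{(k)}\). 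The first piece is \(O(h_k)\) on \(K\), and the second tends to \(0\) by Step 2.
\item Time-continuity part: \(\bm v_\theta\) is continuous on \([0,1]\times\mathbb R^{I\times J}\), since it is continuous in \(t\) and uniformly Lipschitz in \(\bm Z\). It is therefore uniformly continuous on the compact set \([0,1]\times K\), so this term tends to \(0\) as \(t_k\to1\).
\end{itemize}

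\textbf{Step 4 (conclusion).} Note that \(h_k\to0\) follows from \(\sum h_k^2<\infty\), and \(\sum h_k=\infty\) by (i). With Steps 1--3 and the boundedness (vi), Benaïm's proposition applies with time scale \(s_k=\sum_{i<k}h_i\) and yields that \(\widetilde{\bm Z}(s)\) is an asymptotic pseudo-trajectory of the semiflow of \eqref{eq:asymptotic_ode}.
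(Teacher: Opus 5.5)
Your proposal is correct and follows essentially the same route as the paper. You use the same algebraic rewriting into stochastic-approximation form with drift $-\bm Z-\beta\nabla F(\bm Z)+\bm v_\theta(1,\bm Z)$. You then control the martingale $\sum_k h_k\bm Z_0^{(k)}$ through $L^2$-boundedness and a.s. convergence, kill the bias using $\beta_k\to\beta$ and uniform convergence of $\bm v_\theta(t_k,\cdot)$ on the compact set, and conclude by the ODE method.

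The differences are minor. You fold the Lipschitz residual $\bm v_\theta(t_k,\bm W^{(k+1)})-\bm v_\theta(t_k,\bm Z^{(k)})$ into an a.s.-vanishing bias via $h_k\bm Z_0^{(k)}\to\bm 0$ a.s., whereas the paper bounds it separately as an $\mathcal O_{L^2}(h_k^2)$ term. You also make explicit that the limiting drift is globally Lipschitz, which guarantees the flow is well defined.
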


\textit{Proof.} The detailed proof is provided in Appendix~\ref{sec:Proof-of-Proposition_ode}.

Proposition~\ref{prop:ode_limit} gives a continuous-time interpretation of
the late-stage PnP-Flow dynamics. The limiting drift in
\eqref{eq:asymptotic_ode} contains three terms: the contraction term
\(-\bm Z(s)\) induced by the affine reparameterization, the
measurement-consistency term \(-\beta\nabla F(\bm Z(s))\), and the learned
velocity field at \(t=1\), \(\bm v_{\theta}(1,\bm Z(s))\). Thus, near the
end of the flow trajectory, the discrete refinement behaves like a
deterministic dynamical system that balances agreement with the observed RSS
samples and prior information from the learned radio-map distribution.

The connection between this late-stage asymptotic model and the practical time-truncated implementation is clarified in the following remark.

\begin{remark}[Connection to time truncation]
	Proposition~\ref{prop:ode_limit} should be interpreted as an
	algorithmic justification rather than a finite-step convergence guarantee
	for Algorithm~\ref{alg:trace_pnp}. It shows that, in the late stage
	where \(t_k\to1\), the PnP-Flow recursion behaves as a continuous refinement
	dynamics that jointly incorporates measurement consistency and the learned
	radio-map prior. This observation suggests that the early part of the
	flow trajectory is not necessarily indispensable for reconstruction if
	an appropriate intermediate state is available.
	
	In particular, the affine OT path in \eqref{eq:ot_path} implies that
	early-time states contain a dominant contribution from the base noise,
	whereas late-stage states closer to \(t=1\) are expected to carry more radio-map structure.
	Therefore, if one can construct a coarse but structured estimate that is
	compatible with an intermediate flow state, the reconstruction can be
	initialized directly at \(t=a\) and then evolved through the late-stage
	refinement steps toward \(t=1\). In this sense, the proposed
	time truncation replaces the noise-dominated early evolution with a
	spatially informed initialization, while retaining the
	late-stage ODE-like refinement mechanism identified in
	Proposition~\ref{prop:ode_limit}.
\end{remark}

\subsection{Time-Truncated Initialization via Spatial Interpolation Prior}

A truncated trajectory requires an initial state at time \(t=a\) that already contains meaningful spatial information. We construct this state from the available sparse measurements through a lightweight spatial interpolation prior.

We exploit the spatial smoothness of radio maps for this purpose. Given the observed RSS values $\bm y$ and their spatial locations specified by the sampling operator $\mathcal{H}_{\Omega}$, we apply a \ac{knn} interpolation scheme to obtain a coarse radio map estimate, denoted by $\bm X_{\mathrm{prior}}$. This interpolation prior captures the large-scale spatial trend of the propagation field while remaining computationally lightweight.

To make the initialization compatible with the affine parameterization of the conditional OT path in \eqref{eq:ot_path}, we combine the interpolation prior with a base noise sample $\bm Z_0^{(0)}\sim\mathcal P_0$ as
\begin{equation}
	\bm Z_a^{(0)}
	=
	a \bm X_{\mathrm{prior}}
	+
	(1-a)\bm Z_0^{(0)} .
	\label{eq:truncated_initialization}
\end{equation}

The initialization in \eqref{eq:truncated_initialization} uses the affine structure of the conditional flow path. Although it is not intended to be an exact point on the true conditional flow trajectory, it provides a practical surrogate state at time \(t=a\) that preserves coarse spatial structure from \(\bm X_{\mathrm{prior}}\) while retaining the affine form used by the flow model. This design replaces the early noise-dominated integration segment with a lightweight spatial initialization, allowing the subsequent late-stage PnP-Flow refinement steps to focus on measurement-consistent spatial refinement.

\subsection{The Proposed TFM-PnP Algorithm}

Combining the time-truncated initialization in \eqref{eq:truncated_initialization} with the PnP-Flow recursion in \eqref{eq:pnp_flow_iteration} yields the finite-step TFM-PnP inference procedure summarized in Algorithm~\ref{alg:trace_pnp}. Starting from \(t=a\), each iteration first enforces measurement consistency through a gradient step, then reparameterizes the corrected state along the affine flow path, and finally applies the flow-induced prior operator for late-stage refinement toward \(t=1\).

In this way, the algorithm implements the preceding insight in a practical finite-step form: early noise-dominated simulation is replaced by spatial interpolation, while the remaining iterations perform late-stage refinement along the flow path toward \(t=1\).

\begin{algorithm}[t]
	\caption{TFM-PnP Framework for Radio Map Reconstruction}
	\label{alg:trace_pnp}
	\begin{algorithmic}[1]
		\State \textbf{Input:} Observed RSS $\bm{y}$; measurement operator $\mathcal{H}_{\Omega}$; trained velocity field $\bm{v}_{\theta}$; spatial empirical prior $\bm{X}_{\text{prior}}$; base noises $\{\bm{Z}_0^{(k)}\}_{k=0}^{K-1}$ with $\bm{Z}_0^{(k)}\sim\mathcal{P}_0$; gradient step sizes $\{\eta_k\}_{k=0}^{K-1}$; number of inference steps $K\ge 2$; start time $a\in(0,1)$
		\State \textbf{Initialize state:} $\bm{Z}^{(0)}=a\bm{X}_{\text{prior}}+(1-a)\bm{Z}_0^{(0)}$
		\For{$k=0,\dots,K-1$}
		\State $t_k = a+(1-a)\frac{k}{K}$
		
		\State \textit{\# Step 1: Measurement Consistency}
		\State Compute
		\[
		\bar{\bm{Z}}^{(k+1)}
		=
		\bm{Z}^{(k)}
		-
		\eta_k
		\mathcal{H}_{\Omega}^{*}
		\big(
		\mathcal{H}_{\Omega}(\bm{Z}^{(k)})-\bm{y}
		\big)
		\]
		
		\State \textit{\# Step 2: Flow-Inspired Reparameterization}
		\State Compute
		\[
		\bm{W}^{(k+1)}
		=
		t_k\bar{\bm{Z}}^{(k+1)}
		+
		(1-t_k)\bm{Z}_0^{(k)}
		\]
		
		\State \textit{\# Step 3: Flow-Induced Prior Refinement}
		\State Compute
		\[
		\bm{Z}^{(k+1)}
		=
		\bm{W}^{(k+1)}
		+
		(1-t_k)\bm{v}_{\theta}
		\left(
		t_k,\bm{W}^{(k+1)}
		\right)
		\]
		
		\EndFor
		\State \Return $\bm{Z}^{(K)}$
	\end{algorithmic}
\end{algorithm}

The effect of starting from an intermediate time $t=a$ is examined in Fig.~\ref{fig:fix_timestep} in the simulation results.

	\section{Active Learning via Uncertainty Quantification\label{sec:Active-Learning-via}}

	The proposed TFM-PnP approach enables accurate radio map reconstruction
	within each time slot $\tau$. However, due to the UAVs'
	limited flight budget, it remains challenging to efficiently collect
	informative samples across time. To address this issue, in this section,
	we introduce an active trajectory design framework that maximizes the uncertainty reward under a limited flight budget. Specifically,
	we first develop an uncertainty quantification method based on generative
	samples, and then design a trajectory planning strategy that uses
	the resulting uncertainty map to guide the UAV along the most informative
	flight path.

\subsection{Uncertainty Quantification with Generative Samples}

The generative nature of the proposed \ac{tfmpnp} approach naturally enables sample-based characterization of reconstruction uncertainty. Instead of yielding a single deterministic output, we generate a set of $M$ plausible radio map reconstructions $\{\bm{Z}^{(\tau,m)}\}_{m=1}^{M}$ that are approximately consistent with the measurements $\bm{y}^{(\tau)}$ collected up to time slot $\tau$. This is achieved by executing Algorithm~\ref{alg:trace_pnp} $M$ times, each initialized with an independently drawn Gaussian reference sequence \(\{\bm{Z}_{0}^{(k)}\}_{k=0}^{K-1}\) with \(\bm{Z}_{0}^{(k)}\sim\mathcal{P}_{0}\). To characterize the dispersion of these plausible reconstructions induced by different latent initializations, we define the uncertainty map $\bm{U}^{(\tau)}$ at time slot $\tau$ as the element-wise sample variance:
\begin{equation}
	\bm{U}^{(\tau)}=\frac{1}{M}\sum_{m=1}^{M}\left(\bm{Z}^{(\tau,m)}-\frac{1}{M}\sum_{n=1}^{M}\bm{Z}^{(\tau,n)}\right)^{2}.
\end{equation}
This variance serves as a practical proxy for reconstruction uncertainty. In general, regions with dense measurement coverage tend to exhibit smaller sample variance, whereas poorly observed or geometrically complex regions often yield larger variance.

	\subsection{Active Trajectory Planning}

	In each time slot $\tau$, the UAV plans a flight trajectory based on the current uncertainty map $\bm{U}^{(\tau)}$. Unlike existing sequential active sensing schemes that either set the next destination from a local uncertainty criterion \cite{ShrRomChe:J23} or query one new measurement location from an acquisition function \cite{PolSadYeW:J24} in each time slot, we first select one informative and reachable target using a distance-aware weighted uncertainty score and then generate a utility-aware path toward this target via \ac{uaps}.

	Since the trajectory planning is conducted within a fixed time slot
	$\tau$, we omit the superscript $\tau$ in this subsection for notational
	simplicity.

	\subsubsection{Sequential Target Selection}

	In practice, the UAV should not focus only on uncertain regions
	but also consider the flight distance required to reach them. Although
	a target location may have high uncertainty, it can be far from the
	UAV's current position and therefore be inefficient
	to explore, in the sense that the uncertainty reduction achieved per
	unit flight distance is low. By contrast, nearby regions with slightly
	lower uncertainty may offer better exploration efficiency.

	We define $\mathcal{V}$ as the set of sampled locations
	and $\mathcal{V}_{\text{u}}$ as the set of unsampled locations. To
	account for both uncertainty and flight distance, each
	unsampled grid $\bm{g}\in\mathcal{V}_{\text{u}}$ is assigned a weighted
	uncertainty:
	\begin{equation}
		w\left(\bm{g}\right)=\frac{\rho_{B}\left(\bm{g}\right)}{1+\kappa d\left(\bm{g}_{\text{UAV}},\bm{g}\right)},\label{eq:selection_weight}
	\end{equation}
	where $\bm{g}_{\text{UAV}}$ represents the location of the UAV at
	the beginning of time slot $\tau$, $d(\cdot,\cdot)$ denotes the
	Manhattan distance, and $\kappa\ge0$ controls the trade-off between
	informativeness and reachability. The numerator $\rho_{B}(\bm{g})$
	quantifies the local uncertainty density, defined as
	\begin{equation}
		\rho_{B}\!\left(\bm{g}\right)=\frac{1}{B^{2}}\sum_{\bm{g}'\in\mathcal{B}_{B}(\bm{g})}\xi(\bm{g}'),\label{eq:local_density}
	\end{equation}
	where the uncertainty value is \(\xi(\bm{g}')\triangleq\bm{U}_{[i',j']}\)
	at the grid cell $\bm{g}'=(i',j')$, and $\mathcal{B}_{B}(\bm{g})$
	is a $B\times B$ neighborhood centered at $\bm{g}$. Using a neighborhood
	instead of a single grid value yields a more stable estimate of the
	local uncertainty. Single-cell uncertainty values may fluctuate sharply
	due to noise or prediction artifacts, whereas averaging over a $B\times B$
	region captures the underlying spatial structure. A smaller $B$ captures
	sharp local peaks, whereas a larger $B$ reflects broader uncertain
	areas. The additive constant \(1\) in the denominator avoids division by zero
	when the target is close to the UAV and ensures that the distance
	penalty changes smoothly with distance. In practice, both the local uncertainty
	density $\rho_{B}(\bm{g})$ and the distance $d(\cdot,\cdot)$ are
	min-max normalized to $[0,1]$ before evaluating the weighted uncertainty
	to ensure that the trade-off parameter $\kappa$ remains scale-invariant.

	The weighted uncertainty $w\left(\bm{g}\right)$ thus combines informativeness
	(through $\rho_{B}$) and reachability (through $d$) into a unified
	metric, encouraging the UAV to focus on valuable yet reachable sensing
	locations. We then select a single target location $\bm{g}^{\star}\in\mathcal{V}_{\text{u}}$
	via weighted sampling with probability
	\begin{equation}
		p(\bm{g})=\frac{w(\bm{g})}{\sum_{\tilde{\bm{g}}\in\mathcal{V}_{\text{u}}}w(\tilde{\bm{g}})}.
	\end{equation}
	Compared with deterministically selecting the maximum-weight cell,
	weighted sampling favors high-utility locations while preserving exploration
	diversity across replanning rounds.

	\subsubsection{Trajectory Planning Formulation }

	After the target $\bm{g}^{\star}$ is selected, the next step is to
	design a UAV trajectory $\mathcal{Q}=(\bm{g}_{0}=\bm{g}_{\text{UAV}},\,\bm{g}_{1},\,\ldots,\,\bm{g}_{L}=\bm{g}^{\star})$
	that terminates at $\bm{g}^{\star}$, where $L$ denotes the total
	number of movement steps. The trajectory may contain multiple intermediate
	grid cells between the UAV's current location and the
	selected target.

	For tractability, we assume that the UAV can move only in
	the four cardinal directions with unit step cost. Although this Manhattan-distance-based
	abstraction simplifies the actual UAV kinematics, it provides a sufficiently
	accurate approximation of the vehicle motion when the spatial discretization
	is fine enough. Thus, the consecutive locations in $\mathcal{Q}$
	satisfy $\bm{g}_{l}-\bm{g}_{l-1}\in\{(\pm1,0),(0,\pm1)\}$. Therefore,
	the total flight cost is defined as \(C(\mathcal{Q})=L\), which
	represents the total number of movement steps along the trajectory.
	To evaluate the information utility of a trajectory, we define the
	accumulated uncertainty reward \(R_{\mathrm{unc}}(\cdot)\) as
	\begin{equation}
		R_{\mathrm{unc}}(\mathcal{Q})=\sum_{l=1}^{L}\xi(\bm{g}_{l}),
	\end{equation}
	which measures the total uncertainty collected along the path.

	To achieve efficient exploration, we seek a trajectory that collects
	large uncertainty reward while incurring low flight cost, subject
	to terminating at the selected target. This leads to the following
	ratio-form objective:
	\begin{equation}
		\max_{\mathcal{Q}}  \frac{R_{\mathrm{unc}}(\mathcal{Q})}{C(\mathcal{Q})}\label{eq:ratio_objective}
	\end{equation}

Directly solving \eqref{eq:ratio_objective} generally requires iterative
fractional programming \cite{dinkelbach1967nonlinear}, which can be
computationally expensive for real-time UAV routing on dense grids.
In fractional programming, a ratio-form objective can be handled through
a parametric subtractive form \(R_{\mathrm{unc}}(\mathcal{Q})-\phi C(\mathcal{Q})\),
where the parameter \(\phi\) is iteratively updated until convergence.
Inspired by this subtractive reformulation, we avoid the online update
of \(\phi\) and adopt a fixed-parameter scalarized surrogate:
\begin{equation}
	\max_{\mathcal{Q}}\ R_{\mathrm{unc}}(\mathcal{Q})-(1+\lambda)C(\mathcal{Q}).
	\label{eq:scalarized_objective}
\end{equation}
Here, \(\lambda\ge 0\) controls the trade-off between uncertainty
collection and flight cost.

The scalarized objective in \eqref{eq:scalarized_objective} provides
a step-level interpretation for trajectory generation. Since
\(C(\mathcal{Q})\) represents the number of movement steps, each move
naturally incurs a unit flight cost. Meanwhile, the reward term
\(R_{\mathrm{unc}}(\mathcal{Q})\) indicates that moving through grids with higher
uncertainty is more informative. Therefore, instead of searching for
the shortest path purely in terms of distance, we assign each movement
a utility-aware cost, where the cost of moving into a grid cell is
discounted according to its uncertainty. This
design preserves the basic penalty on path length, while encouraging
the UAV to pass through informative intermediate regions when the
additional uncertainty gain is worth the extra travel distance.

Based on this interpretation, we propose \ac{uaps} to construct a utility-aware trajectory
from the current UAV location \(\bm{g}_{0}=\bm{g}_{\text{UAV}}\) to
the selected target \(\bm{g}_{L}=\bm{g}^{\star}\). It evaluates each
explored grid cell \(\bm{g}\) through a composite evaluation function
\[
\varphi(\bm{g})=r(\bm{g})+h(\bm{g}),
\]
where \(r(\bm{g})\) denotes the accumulated utility-aware path cost
from \(\bm{g}_{0}\) to the current cell \(\bm{g}\), and \(h(\bm{g})\)
estimates the remaining cost from \(\bm{g}\) to the selected target.

	\begin{itemize}
		\item $r(\bm{g})$ represents the flight cost penalized by the collected
		uncertainty along the trajectory from $\bm{g}_{0}$ to
		the current location $\bm{g}$:
		\begin{equation}
			r(\bm{g})=\sum_{k=0}^{m-1}c_{\text{step}}\big(\bm{s}_{k},\bm{s}_{k+1}\big),
		\end{equation}
		where \(\bm{s}_{0}=\bm{g}_{0}\), \(\bm{s}_{m}=\bm{g}\),
		and \((\bm{s}_{0},\ldots,\bm{s}_{m})\) denotes the sequence of \(m+1\)
		grid cells on the local subpath, and $c_{\text{step}}(\bm{s}_{k},\bm{s}_{k+1})$ is the microscopic
		step cost of moving from cell $\bm{s}_{k}$ to $\bm{s}_{k+1}$, defined
		as:
		\begin{equation}
			c_{\text{step}}\big(\bm{s}_{k},\bm{s}_{k+1}\big)=1-\alpha\,\tilde{\xi}\big(\bm{s}_{k+1}\big).
		\end{equation}
		Here, \(\tilde{\xi}(\bm{s}_{k+1})=\frac{\xi(\bm{s}_{k+1})-\min(\bm{U})}{\max(\bm{U})-\min(\bm{U})}\)
		denotes the normalized uncertainty at location $\bm{s}_{k+1}$, $\bm{U}$
		is the global uncertainty map, and $\alpha=1/(1+\lambda)\in(0,1]$.
		\item The heuristic term $h(\bm{g})$ estimates the remaining travel cost from the current cell $\bm{g}$ to the target $\bm{g}_{L}$:
		\begin{equation}
			h(\bm{g})=(1-\alpha)\,d(\bm{g},\bm{g}_{L}),
		\end{equation}
		Since $1-\alpha$ is the minimum possible step cost, the heuristic never overestimates the true surrogate path cost and is therefore admissible.
	\end{itemize}
	At each iteration, the algorithm extracts the cell with the minimum
	composite score $\varphi(\bm{g})$, expands its unvisited neighbors
	to update their accumulated weights $r(\cdot)$ and heuristic costs
	$h(\cdot)$. This search strategy proceeds iteratively until the target
	cell $\bm{g}_{L}$ is extracted, yielding the continuous
	trajectory $\mathcal{Q}$.

	The overall procedures for the sequential active trajectory planning
	are summarized in \textbf{Algorithm~\ref{alg:active_uaps}}.
	\begin{algorithm}[t]
		\caption{Sequential Active Trajectory Planning}
		\label{alg:active_uaps}
		\begin{algorithmic}[1]
			\State \textbf{Input:} Current UAV location \(\bm{g}_{\text{UAV}}\), uncertainty map \(\bm{U}\), unsampled set \(\mathcal{V}_{\text{u}}\), and parameters \(B, \kappa, \lambda\)
			\State \textbf{Output:} Selected target \(\bm{g}^{\star}\) and flight trajectory $\mathcal{Q}$

			\vspace{0.05in}
			\State \textbf{Stage 1: Sequential Target Selection}
			\ForAll{$\bm g\in\mathcal V_{\text{u}}$}
			\State Compute $\rho_B(\bm g)=B^{-2}\sum_{\bm g'\in\mathcal B_B(\bm g)}\xi(\bm g')$ and $d(\bm g_{\text{UAV}},\bm g)$
			\EndFor
			\State Min-max normalize $\rho_B(\bm g)$ and $d(\bm g_{\text{UAV}},\bm g)$ over $\mathcal V_{\text{u}}$
			\State Compute $w(\bm g)=\frac{\rho_B(\bm g)}{1+\kappa d(\bm g_{\text{UAV}},\bm g)}$ for all $\bm g\in\mathcal V_{\text{u}}$
			\State Sample one target $\bm g^{\star}$ with probability $p(\bm g)=w(\bm g)/\sum_{\tilde{\bm g}\in\mathcal V_{\text{u}}}w(\tilde{\bm g})$

			\vspace{0.05in}
			\State \textbf{Stage 2: Trajectory Generation via \ac{uaps}}
			\State Normalize $\bm U$ to $\tilde{\xi}(\cdot)\in[0,1]$ and set $\alpha=1/(1+\lambda)$
			\State Initialize $r(\bm g_{\text{UAV}})=0$ and $r(\bm g)=\infty$ for all other grid cells
			\State Initialize open set $\mathcal O=\{\bm g_{\text{UAV}}\}$, closed set $\mathcal C=\emptyset$, and parent map $\pi(\cdot)$
			\While{$\mathcal O\neq\emptyset$}
			\State Extract $\bm g=\arg\min_{\bm q\in\mathcal O}\{r(\bm q)+(1-\alpha)d(\bm q,\bm g^{\star})\}$
			\If{$\bm g=\bm g^{\star}$}
			\State \textbf{break}
			\EndIf
			\State Move $\bm g$ from $\mathcal O$ to $\mathcal C$
			\ForAll{four-neighbor cells $\bm g'\notin\mathcal C$ of $\bm g$}
			\State Compute $c_{\text{step}}(\bm g,\bm g')=1-\alpha\,\tilde{\xi}(\bm g')$
			\If{$r(\bm g)+c_{\text{step}}(\bm g,\bm g')<r(\bm g')$}
			\State Set $r(\bm g')=r(\bm g)+c_{\text{step}}(\bm g,\bm g')$ and $\pi(\bm g')=\bm g$
			\State Insert or update $\bm g'$ in $\mathcal O$
			\EndIf
			\EndFor
			\EndWhile
			\State Reconstruct $\mathcal Q=(\bm g_{\text{UAV}},\ldots,\bm g^{\star})$ by backtracking the parent map $\pi(\cdot)$
			\State \Return \(\bm{g}^{\star}, \mathcal{Q}\)
		\end{algorithmic}
	\end{algorithm}

	\section{Simulation Results\label{sec:Simulation-Results}}

	In this section, we present simulation results to evaluate the
	proposed flow-based active learning
	framework.

	The ground truth radio maps were generated using the Sionna ray-tracing
	simulator \cite{FayJakMer:J25}, which provides physically accurate
	radio channel predictions. We adopt the open-source ``Etoile'' 3D
	urban scenario, and randomly deploy $7$ transmitters within this
	scenario. The ground truth radio map is then computed over a planar
	grid of $128\times128$ at a fixed altitude of $60$ m, yielding an
	RSS matrix $\bm{H}$ with dimensions $I=J=128$. This process is repeated
	to generate a dataset of $2000$ radio maps for training and $200$
	radio maps for testing. In the training stage of flow matching, the
	source distribution $\mathcal{P}_{0}$ is chosen as $\mathcal{N}(0,1)$.
	Unless otherwise specified, the UAV measurements are corrupted by
	additive Gaussian noise with variance $\sigma_y^2=1$.

	The performance of the reconstruction is quantified using the \ac{nmse}.
	To ensure physical consistency, the NMSE is calculated in the linear
	power domain rather than the dB scale. The NMSE is then computed as
	\[
	\text{NMSE}=\frac{\|10^{\bm{H}/10}-10^{\bm{Z}/10}\|_{\text{F}}^{2}}{\|10^{\bm{H}/10}\|_{\text{F}}^{2}},
	\]
	where $\bm{H}$ and $\bm{Z}$ are the ground truth and reconstructed
	radio map, $10^{\bm{H}/10}$ and $10^{\bm{Z}/10}$ denote element-wise
	conversions from the dB scale to the linear power scale, not matrix
	exponentiation.

	\subsection{Radio Map Construction Performance}

	We compare the proposed TFM-PnP approach with several representative
	baselines as follows:
	\begin{itemize}
		\item PnP-Flow \cite{MarGagHag:J25}: A flow-based prior is directly
		incorporated into a \ac{pnp} framework, with the inference time starting from $t=0$.
		\item Diffusion posterior sampling (DPS) \cite{chung2024diffusionposteriorsamplinggeneral}:
		A diffusion-based inverse solver that reconstructs the unknown field
		by iteratively sampling from the posterior distribution conditioned
		on the observed measurements. DPS uses a generative prior for stable recovery but usually requires more sampling steps.
		\item RePaint \cite{lugmayr2022repaintinpaintingusingdenoising}: A \ac{ddpm}
		based inpainting approach. It alternates between denoising and conditional
		resampling to gradually fill in the missing regions of the map, serving
		as a representative diffusion-based inpainting baseline.
		\item Deep radio map and uncertainty estimator (DRUE) \cite{ShrRomChe:J23}: An autoencoder-based architecture is employed to learn the underlying radio-map structure; for the reconstruction-only comparison, we use its radio-map prediction component.
		\item Gaussian process regression (GPR): A classical probabilistic interpolation
		baseline that reconstructs the radio map by learning spatial correlations
		from the observed measurements. In our implementation, GPR is trained
		on the normalized spatial coordinates and the corresponding normalized
		signal values, using a composite kernel consisting of a constant kernel,
		a radial basis function (RBF) kernel, and a white-noise kernel. The kernel hyperparameters
		are optimized via marginal likelihood with multiple restarts, enabling
		adaptive modeling of the map smoothness and noise level.
		\item \Ac{knn}, with $k=5$. A distance-weighted interpolation method is
		used as a lightweight local baseline. It predicts each missing location
		based on the observed values of its five nearest neighbors in the
		normalized spatial domain.
	\end{itemize}

	\subsubsection{Sensitivity with Respect to $\eta$}

	In this subsection, we investigate the sensitivity of the proposed
	TFM-PnP framework to the gradient step size $\eta$. We adopt a constant
	step size, i.e., $\eta_k= \eta$ for all iterations $k$. We
	use a finite-step implementation in which the constant step
	size is tuned empirically. This differs from the scaled step-size regime
	used in the idealized asymptotic analysis, but follows the common practice
	of finite-step PnP-type inference algorithms.
	To evaluate this choice, we
	consider a highly sparse observation scenario where grid locations
	are uniformly sampled at random, with sampling rates ranging from
	$2\%$ to $6\%$. The reconstruction performance of the proposed method
	is evaluated across different gradient step sizes,
	specifically $\eta\in\{0,0.5,1,1.5,2,2.2\}$.

	The corresponding NMSE curves in Fig.~\ref{fig:versus_eta}
	show that increasing the gradient step size improves reconstruction accuracy, with the best performance obtained around $\eta=2$.
	When $\eta=0$ (i.e., no data-fidelity gradient update is applied),
	the NMSE remains at a high level. However, as $\eta$ increases from
	$0.5$ to $2$, we observe a consistent and marked reduction in NMSE
	across all sampling rates. Notably, the performance gain begins to
	saturate as the step size becomes larger, with the curves for $\eta=2$
	and $\eta=2.2$ showing only marginal differences. These results indicate that a sufficiently large gradient step size is important for enforcing data consistency, and $\eta=2$ provides near-best performance in our experiments.
	\begin{figure}
		\includegraphics{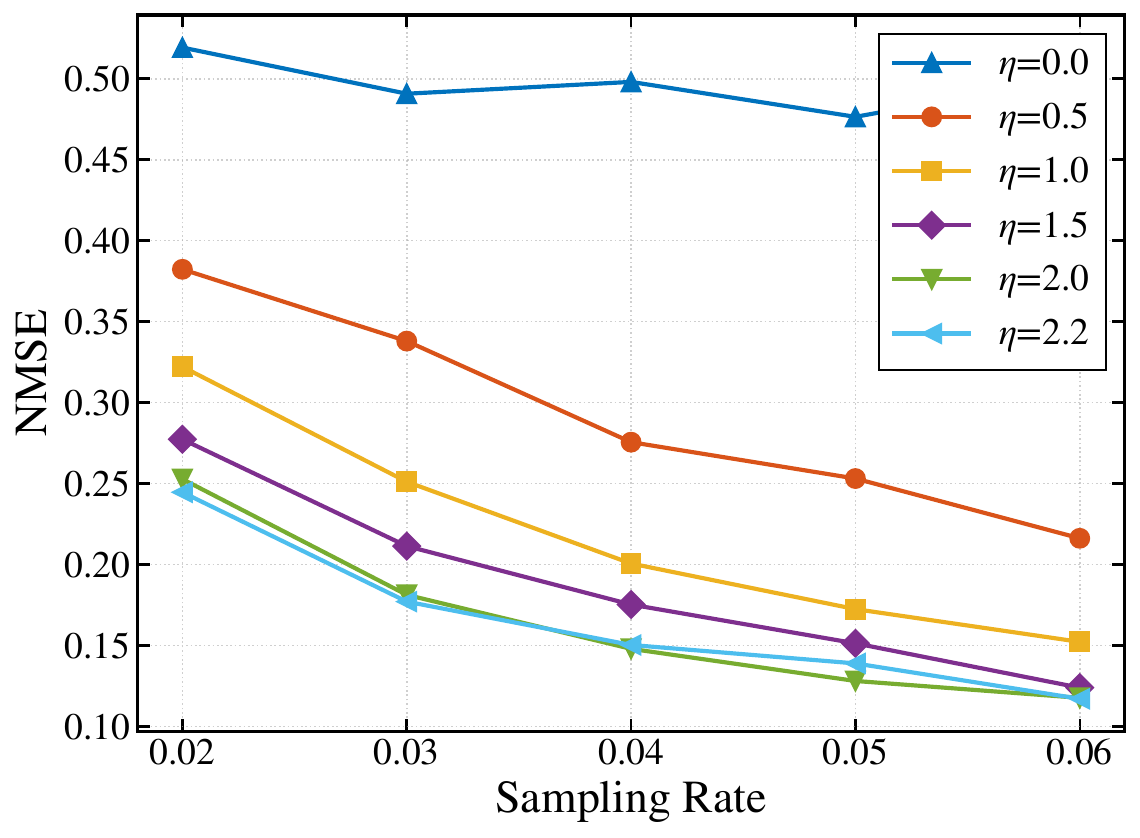}\caption{\label{fig:versus_eta}Performance comparison (NMSE) across varying
			sampling rates under different gradient step sizes $\eta$. A larger gradient step size
			lowers the NMSE, though the marginal performance gain diminishes as
			$\eta$ exceeds 2.0.}
	\end{figure}

	\subsubsection{Effectiveness of the Time-Truncation Mechanism}

	\begin{figure}
		\subfigure{\includegraphics[width=1\columnwidth]{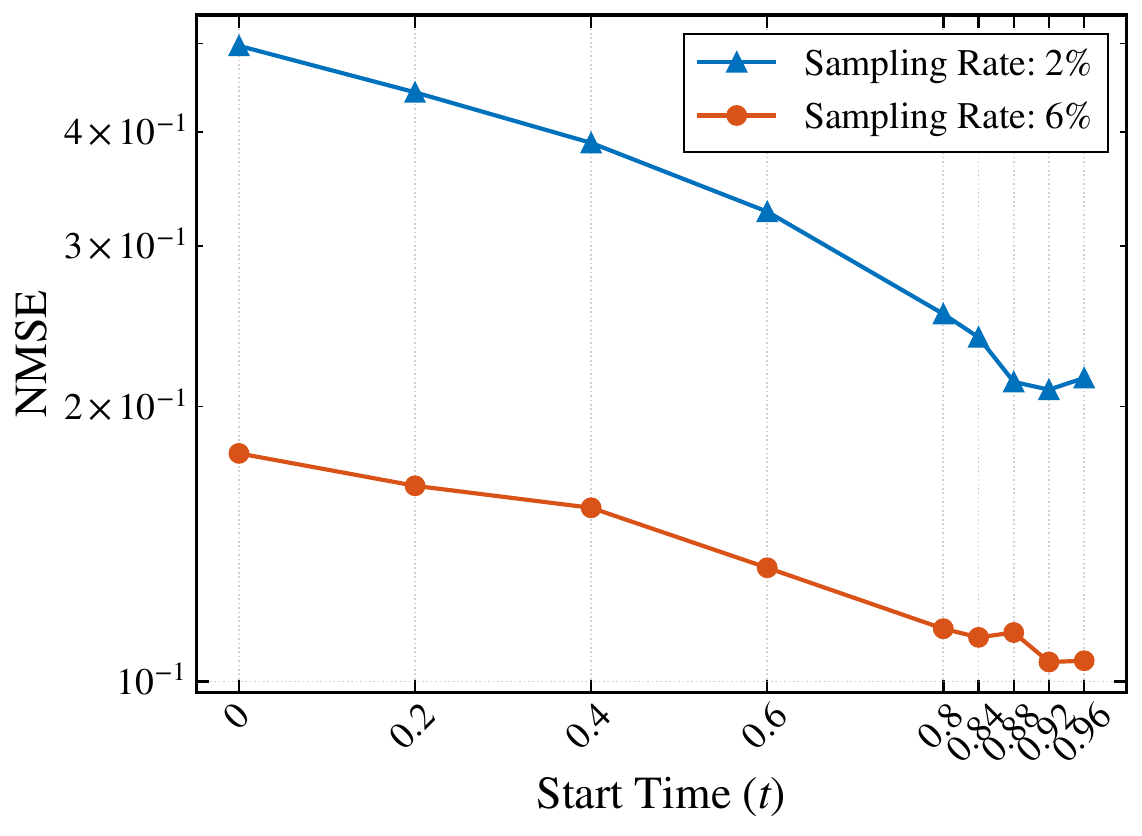}}\caption{\label{fig:fix_timestep}Performance comparison (NMSE) under various
			start times. Increasing the start time lowers the NMSE
			initially, while the performance improvement becomes marginal and
			saturates around $t=0.9$.}
	\end{figure}

	To investigate the influence of the start time $a$ on the performance
	of the proposed TFM-PnP framework, we vary $a$ from $0$ to
	$0.96$. As illustrated in Fig.~\ref{fig:fix_timestep}, increasing
	the start time $a$ initially leads to a steady
	reduction in NMSE across both $2\%$ and $6\%$ sampling rates. However,
	as $a$ approaches $0.9$, the performance gain becomes marginal,
	indicating that the reconstruction quality begins to saturate. The
	framework achieves stable performance around $a=0.9$. Increasing
	the sampling rate also improves reconstruction quality, since denser
	measurements provide stronger constraints and reduce uncertainty.

	These findings suggest that skipping the early flow stages, which are dominated by substantial noise and limited structural information, allows the PnP process to focus on refining fine spatial structures rather than performing coarse denoising.

	\subsubsection{Reconstruction Performance Comparison}

	We evaluate the reconstruction performance by varying the sampling
	ratio from $2$\% to $6$\%. As shown in Fig.~\ref{fig:nmse_sr_cmp_baselines},
	the proposed TFM-PnP approach outperforms all baseline
	methods, achieving more than $10$\% improvement in NMSE across all
	sampling ratios.

	Although all baseline methods are carefully
	tuned, they perform worse than the proposed
	approach. Generative baselines such as RePaint, DPS, and PnP-Flow typically begin reconstruction from pure noise. Under highly sparse sampling conditions (e.g., $2\%$ to $6\%$),
	the available measurements are often insufficient to reliably guide
	the stochastic and noise-dominated early stages of generation,
	leading to structural hallucinations or deviations from the
	true spatial layout. In contrast, the proposed TFM-PnP framework mitigates these limitations
	through an early-time truncation.

	To provide a visual comparison and assess the generalization
	capability of our method, we fix the sampling ratio at $6\%$ and
	present representative reconstruction results across two distinct
	scenarios in Fig.~\ref{fig:nmse_sr_cmp_baselines_visible} and Fig.
	\ref{fig:nmse_sr_cmp_baselines_visible-1}. While traditional baselines
	(e.g., GPR, DRUE) suffer from severe over-smoothing, generative
	baselines like DPS and PnP-Flow show competitive performance,
	recovering the overall power distribution. However, upon
	closer inspection of complex structural regions, the proposed approach
	better preserves fine-grained details and sharp boundaries, yielding
	more accurate and spatially consistent reconstructions. 

	\begin{figure}
		\includegraphics[width=1\columnwidth]{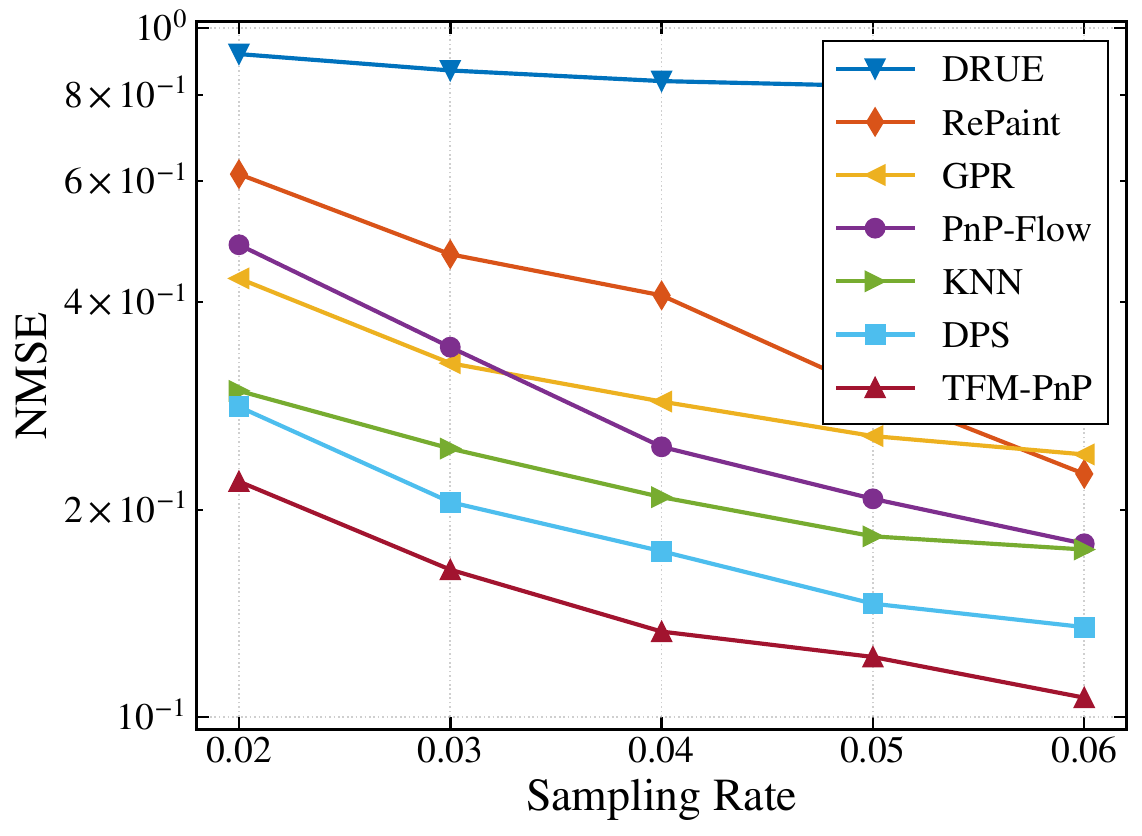}\caption{\label{fig:nmse_sr_cmp_baselines}Reconstruction NMSE versus different
			sampling ratios. The proposed TFM-PnP achieves the lowest NMSE across
			all sampling ratios.}
	\end{figure}
	\begin{figure}
		\includegraphics[width=1\columnwidth]{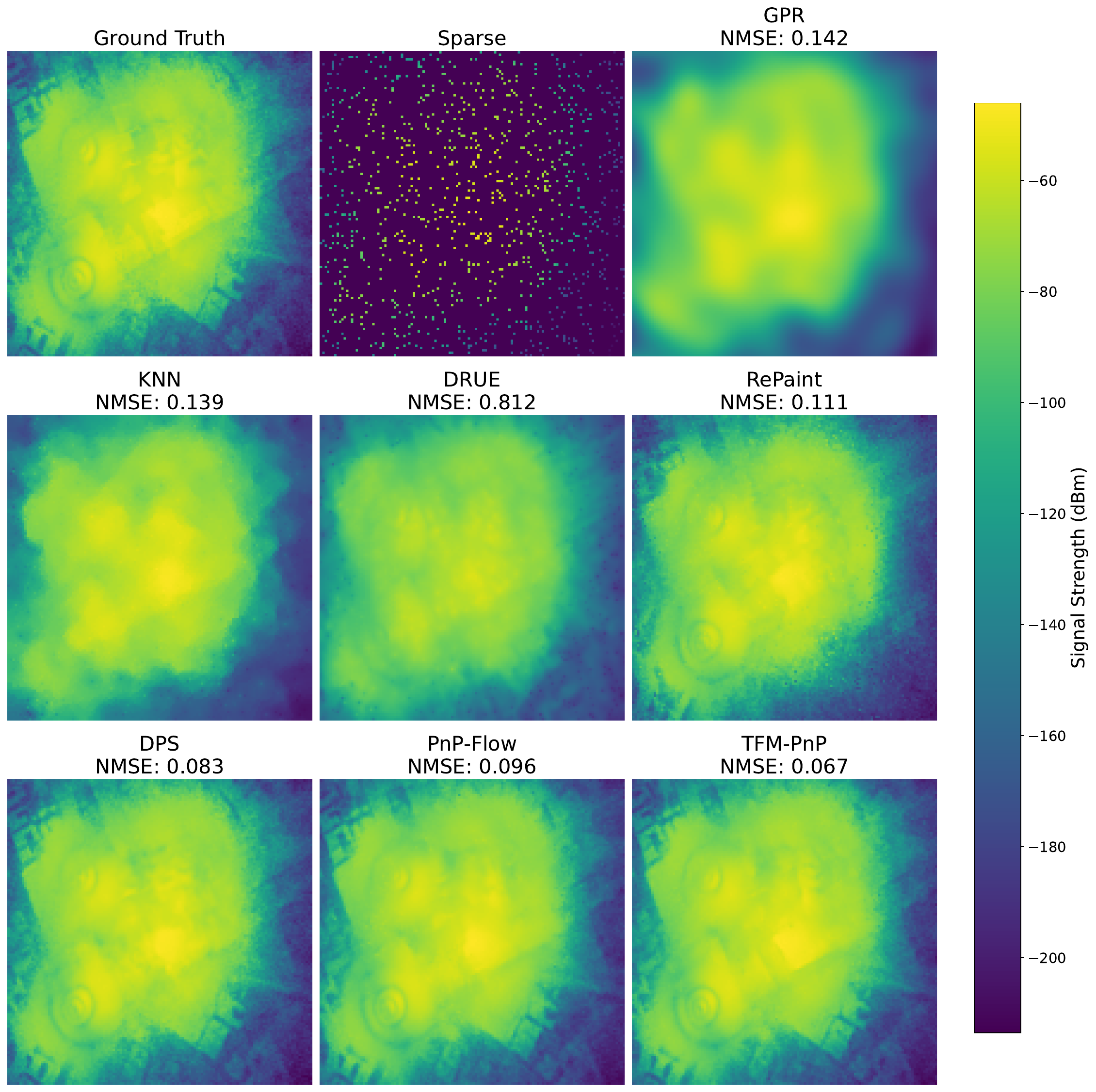}\caption{\label{fig:nmse_sr_cmp_baselines_visible}Visual comparison under
			$6\%$ sampling ratio. Compared with the baselines, the proposed TFM-PnP
			yields more accurate and spatially consistent reconstructions.}
	\end{figure}

	\subsubsection{Running Time and Inference Steps}

To comprehensively evaluate the computational efficiency of the proposed
	TFM-PnP framework, this subsection investigates the impact of the
	number of inference steps on reconstruction accuracy and quantitatively
	compares the execution time against existing baselines. Fig.~\ref{fig:mse_steps}
	illustrates the NMSE of TFM-PnP with respect to the number of inference
	steps, evaluated across various sparse sampling rates. A consistent
	trend is observed: the NMSE decreases rapidly as the number of inference
	steps increases to approximately $50$. Subsequently, as the integration
	steps exceed $60$, the marginal performance gain becomes small,
	and the NMSE curves saturate. 

	Building upon these step-wise observations, we further quantify the
	corresponding wall-clock inference time. Based on the saturation point
	identified in Fig.~\ref{fig:mse_steps}, the number of inference
	steps for the proposed TFM-PnP is empirically set to $60$. As detailed
	in Table~\ref{tab:running_time}, the proposed framework requires less
	inference time than the competing generative solvers. Note that the results are reported as the mean
	$\pm$ standard deviation across multiple independent trials to ensure
	statistical reliability. Diffusion models, such as RePaint and DPS, require
	hundreds of inference steps, which leads to substantial
	inference latency and makes them less suitable for power-constrained
	UAV applications. In contrast, TFM-PnP reduces the
	required forward passes, requiring only 0.74 seconds per radio map.
	This lightweight inference process makes the proposed
	framework suitable for real-time or near-real-time UAV tasks that require
	fast environmental sensing and trajectory adaptation.
		\begin{figure}
		\includegraphics[width=1\columnwidth]{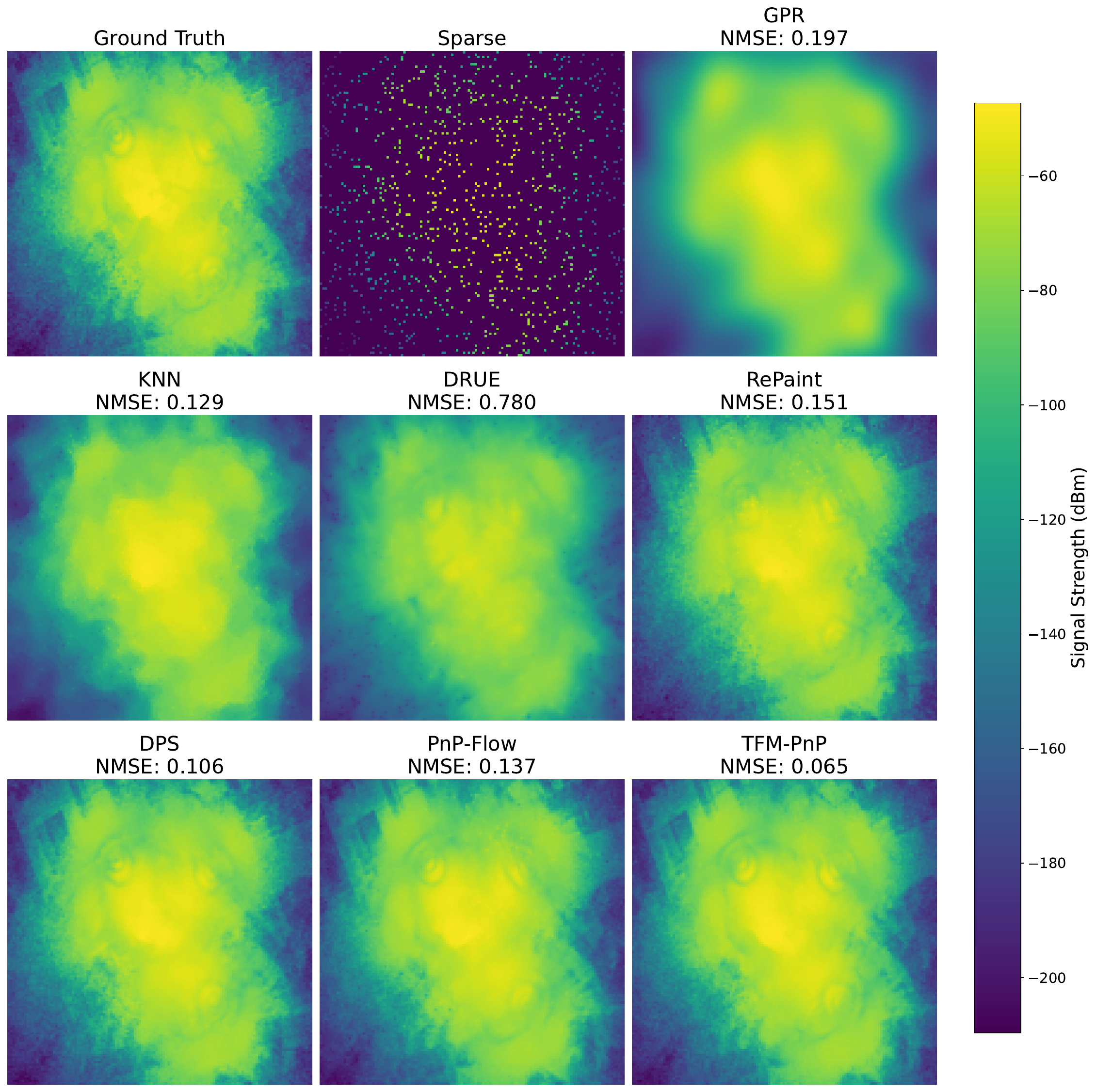}\caption{\label{fig:nmse_sr_cmp_baselines_visible-1}Cross-scenario validation
			of radio map reconstruction under a $6\%$ sampling ratio in a topologically
			distinct environment. The low NMSE and spatial
			consistency indicate the generalization capability of the
			proposed TFM-PnP framework.}
	\end{figure}
		\begin{table}[t]
		\caption{Average Inference Time Per Radio Map}
		\label{tab:running_time}
		\centering
		\begin{tabular}{l c}
			\hline\hline
			\textbf{Method} & \textbf{Average Inference Time (s)} \\
			\hline
			GPR & $8.1234 \pm 0.0958$ \\
			RePaint & $6.9604 \pm 0.0118$ \\
			DPS & $5.1105 \pm 0.0382$ \\
			PnP-Flow & $1.0805 \pm 0.0002$ \\
			\textbf{TFM-PnP (Proposed)} & $\bm{0.7359 \pm 0.0412}$ \\
			KNN & $0.0715 \pm 0.0394$ \\
			DRUE & $0.0016 \pm 0.0001$ \\
			\hline\hline
		\end{tabular}
	\end{table}

	\begin{figure}
		\includegraphics[width=1\columnwidth]{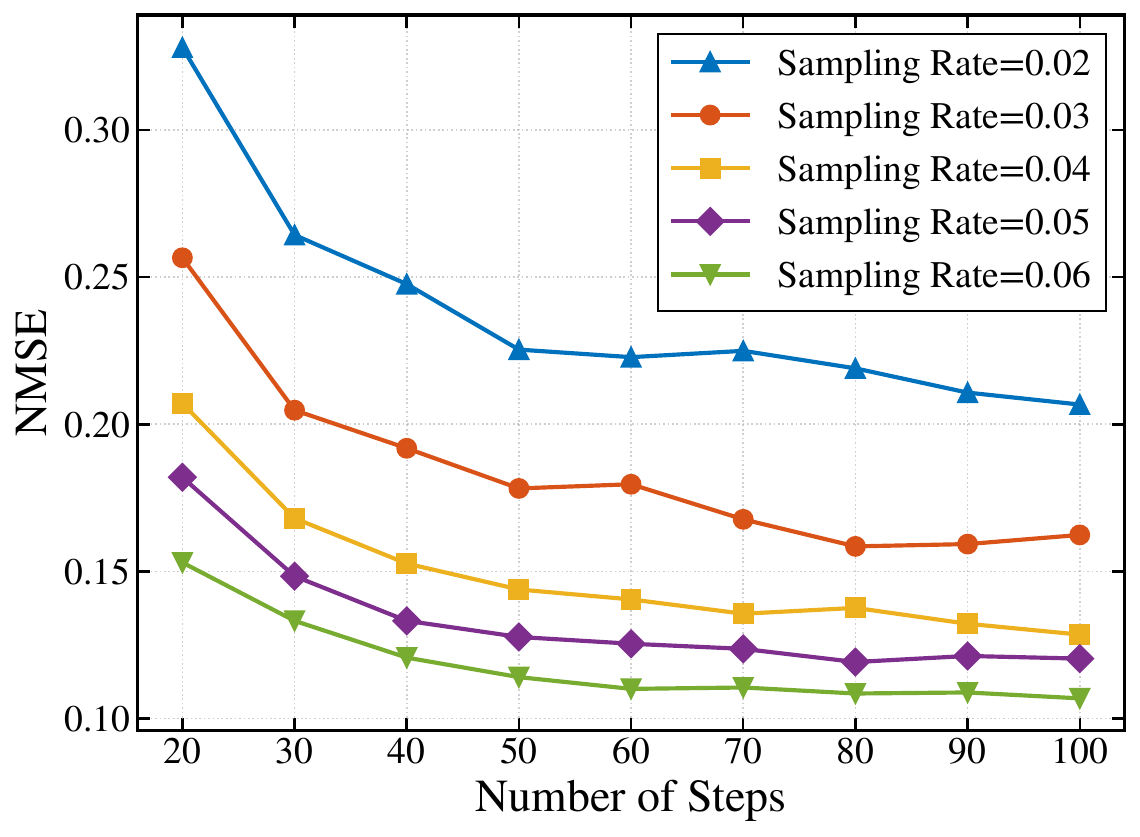}\caption{\label{fig:mse_steps}Impact of the number of inference steps on the
			reconstruction NMSE of the proposed TFM-PnP framework under various
			sampling rates. Setting the number of inference steps to 60 is
			sufficient to achieve near-saturated performance.}
	\end{figure}
	\begin{figure}
	\includegraphics[width=1\columnwidth]{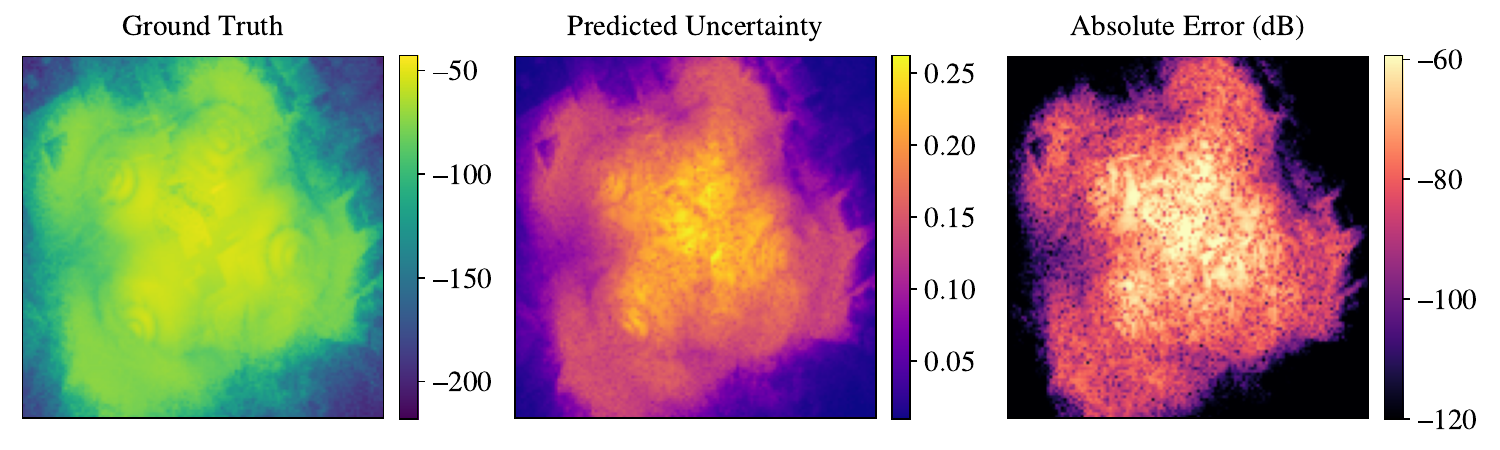}\caption{\label{fig:uncertainty_map}Comparison of uncertainty maps and reconstruction
		errors. From left to right: Ground truth radio map, predicted uncertainty
		map of the proposed method, and the absolute reconstruction error
		(dB). The proposed method produces an uncertainty map that is
		qualitatively consistent with the spatial distribution of the true reconstruction
		error and the underlying radio map structure. This suggests the
		model's ability to indicate regions of high predictive
		uncertainty.}
\end{figure}
	\subsection{Active Learning and Radio Map Construction}

	In this experiment, we assume that $2\%$ of the grid locations are
	initially observed. During the active learning phase, the UAV is assigned
	an additional flight budget of $2,000$ movement steps, guided by the
	proposed active learning strategy summarized in Algorithm~\ref{alg:active_uaps}.
	At each visited grid cell, the UAV acquires one RSS measurement, and
	$\mathcal{H}_{\Omega}(\bm{H})$ extracts the entries of the radio map
	corresponding to the set of visited locations. In the active learning
	phase, we set $\kappa=1$, $\lambda=1$, and $B=5$.

	We benchmark the proposed method against two baselines under the same overall UAV flight budget.

	\begin{itemize}
		\item Baseline 1: Random Sampling. In this baseline, $2,000$ additional
		sampling locations are selected uniformly at random from the entire
		grid. The reconstruction is then performed through the proposed TFM-PnP
		based on all collected samples. This strategy does not involve an
		iterative, uncertainty-guided trajectory planning process.
		\item Baseline 2: Deep Radio Map and Uncertainty Estimator (DRUE) \cite{ShrRomChe:J23}.
		This method employs two autoencoders to estimate the radio map and
		its uncertainty separately, followed by a UAV trajectory designed
		in an uncertainty-aware manner.
	\end{itemize}

	\subsubsection{Uncertainty Map}

	The uncertainty map $\bm{U}^{(\tau)}$ is obtained by generating $M$ reconstructed radio maps and computing the pixel-wise variance
	across the samples. This approach provides a practical proxy for reconstruction uncertainty induced by sparse measurements. 

	Fig.~\ref{fig:compare_M} presents a sensitivity analysis regarding
	the number of reconstruction samples $M$, evaluated from 3 to 10. While all configurations
	demonstrate a consistent decrease in NMSE over the flight steps, the
	curves remain closely aligned throughout the process. The small performance variation across different values of $M$ suggests that the method is not sensitive to this parameter over the tested range. To maintain sufficient statistical
	confidence without introducing unnecessary computational overhead,
	we adopted $M=5$ as the default setting for the remainder of this
	study.

	As shown in Fig.~\ref{fig:uncertainty_map}, the proposed method produces
	uncertainty patterns that are well aligned with the physical characteristics
	of the radio map. The uncertainty map exhibits a strong qualitative correspondence with the actual reconstruction error, suggesting that the sample variance provides a useful proxy for reconstruction uncertainty. Areas of high uncertainty
	coincide with higher reconstruction errors, whereas regions of low
	uncertainty correspond to accurate predictions. Such alignment is
	crucial for downstream tasks such as active sampling and adaptive
	flight planning, where uncertainty serves as a guide for selecting
	informative measurements.

	\subsubsection{Active Learning}
			\begin{figure}
	\centering\includegraphics[width=1\columnwidth]{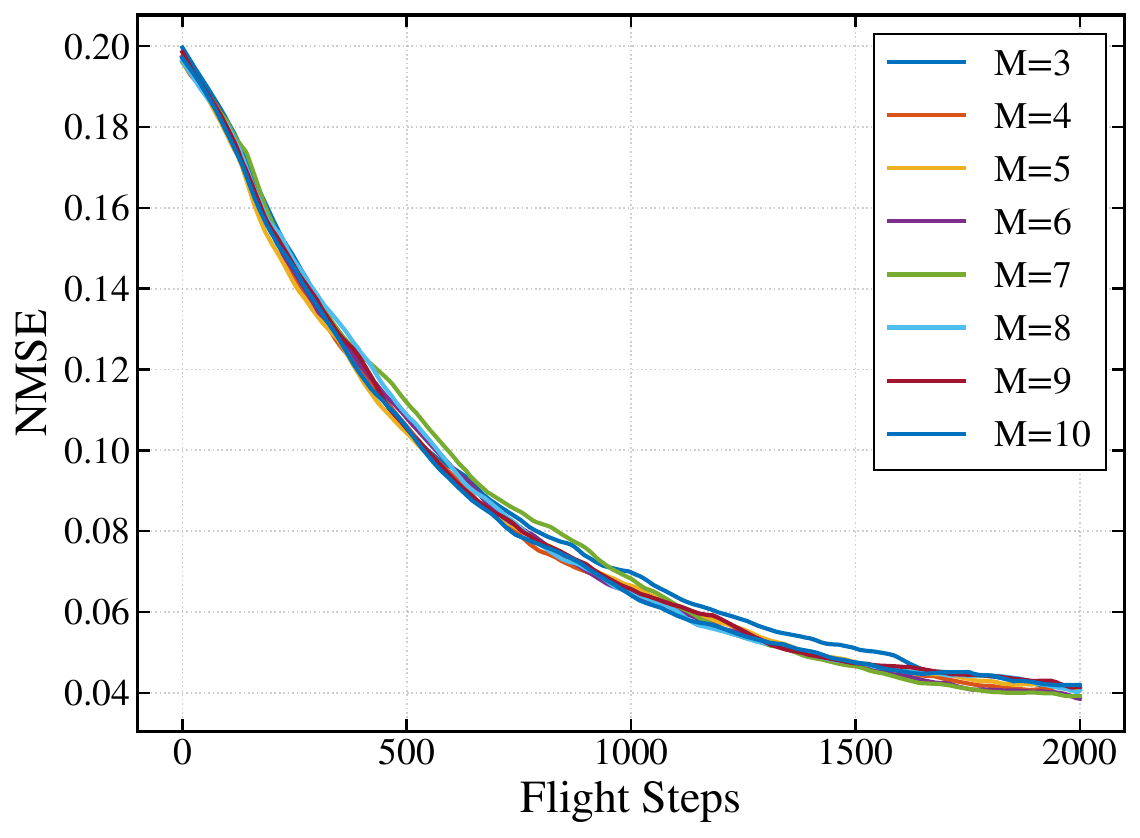}
	
	\caption{\label{fig:compare_M}Sensitivity analysis of reconstruction performance (NMSE) with respect to the number of reconstruction samples $M$. The overlapping curves indicate limited sensitivity to variations in $M$.}
\end{figure}
\begin{figure}
	\centering\includegraphics[width=1\columnwidth]{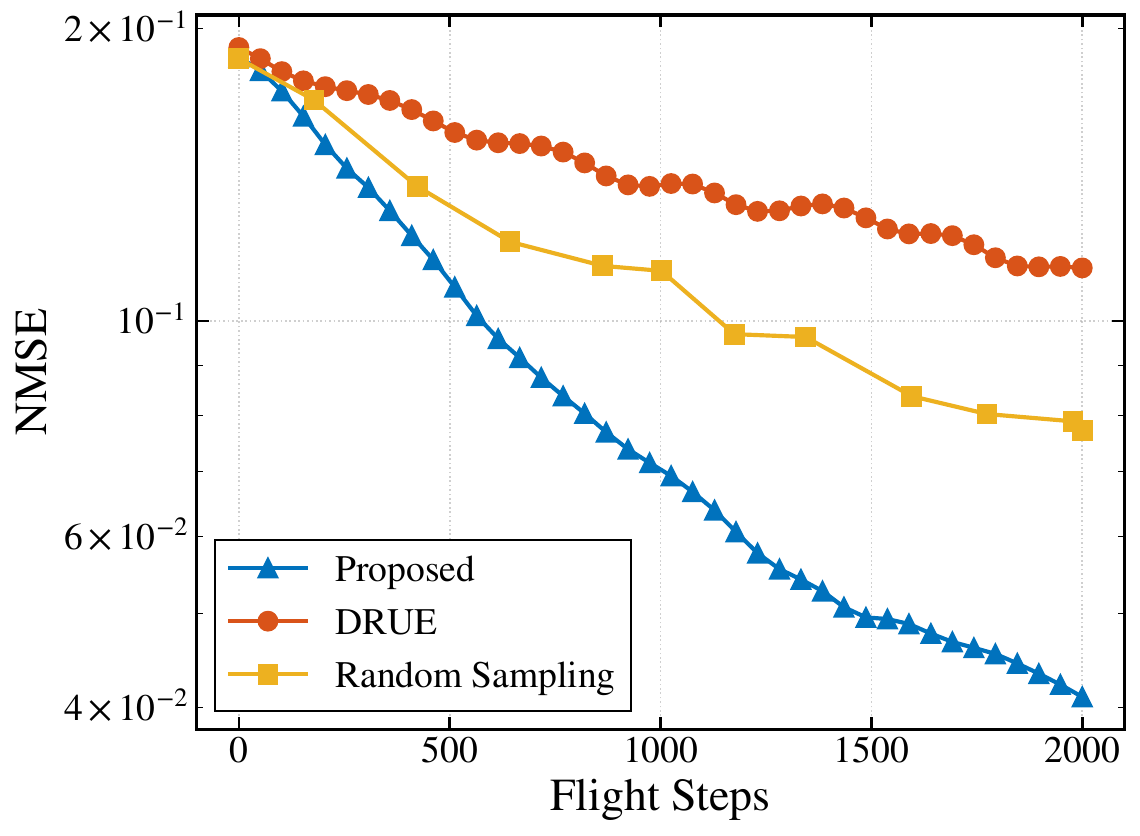}
	
	\caption{\label{fig:Reconstruction-NMSE-versus sampling process}NMSE with
		respect to UAV flight steps, showing that the proposed uncertainty-guided
		sampling strategy achieves faster error reduction.}
\end{figure}
	To assess the effectiveness of the proposed active learning strategy
	in UAV-assisted low-altitude radio map reconstruction, we compare
	the proposed method with two baselines, namely random sampling and DRUE.

	The quantitative performance is presented in Fig.~\ref{fig:Reconstruction-NMSE-versus sampling process},
	which plots the NMSE against the total UAV flight steps. Throughout
	the active learning process, the proposed method achieves
	a faster NMSE reduction than DRUE and random sampling.
	This improvement arises from the uncertainty-driven sequential target-selection
	mechanism, which identifies informative and reachable sensing locations,
	and from the utility-aware path planner, which favors uncertain regions
	along the transit path while maintaining feasible flight trajectories.

	After $2,000$ flight steps, the proposed approach achieves an NMSE
	reduction exceeding $50\%$, highlighting
	a performance gain in both reconstruction accuracy and
	data efficiency. These results suggest that the proposed
	uncertainty-driven trajectory planning framework enables the UAV to
	acquire informative samples within a limited flight budget,
	yielding accurate and spatially consistent radio maps.

	\section{Conclusion\label{sec:Conclusion}}

	In this paper, we proposed an active learning framework based on flow
	matching for efficient and accurate low-altitude radio map construction from
	sparse UAV measurements. To address the computational bottlenecks and instability
	of traditional generative solvers, we developed the TFM-PnP approach, which
	integrates a continuous-time flow matching prior into a \ac{pnp}
	architecture. We first established an idealized ODE-method characterization of the late-stage asymptotic behavior of the generative \ac{pnp} dynamics. This result provides a dynamical interpretation of the late-stage refinement process after truncation. 
	Combined with an explicit data-fidelity gradient update and flow-path
	reparameterization, TFM-PnP achieves high-fidelity and physically consistent
	reconstruction without the need for model retraining. Furthermore, we used the inherent generative diversity of the
	flow model to derive a spatial uncertainty map. This pixel-wise
	uncertainty systematically guides a sequential trajectory-planning strategy,
	where an informative and reachable target location is selected at each
	replanning round and connected using the proposed \ac{uaps} algorithm. Extensive evaluations on Sionna ray-tracing datasets show that the proposed framework achieves NMSE reductions relative to the compared baselines while maintaining strong spatial fidelity.


	\appendices{}


	\section{Proof of Proposition \ref{prop:ode_limit}\label{sec:Proof-of-Proposition_ode}}

Fix a finite index \(k_0\) and consider the sequence for \(k\ge k_0\).
	Any finite number of early updates, including the possible initial update
	from \(t_0=0\), is treated as an initialization transient and does not
	affect the asymptotic ODE limit. For notational simplicity, we keep the
	index \(k\) for this late-stage sequence and define \(h_k\triangleq1-t_k\).
	From \(\sum_{k=k_0}^{\infty}h_k=\infty\) and
	\(\sum_{k=k_0}^{\infty}h_k^2<\infty\), we have \(h_k\to0\) and hence
	\(t_k\to1\).

	Let \(\{\mathcal F_k\}_{k\ge k_0}\) be the filtration generated by the
	past iterates and the reference samples drawn before step \(k\). By assumption, the fresh sample
	\(\bm Z_0^{(k)}\) is independent of \(\mathcal F_k\),
	\[
	\mathbb E[\bm Z_0^{(k)}\mid\mathcal F_k]=\bm 0,\qquad
	\mathbb E[\|\bm Z_0^{(k)}\|^2\mid\mathcal F_k]\le C_0
	\]
	for some constant \(C_0>0\). Using the notation in
	\eqref{eq:pnp_flow_iteration}, the measurement-consistency update is
	\begin{equation}
		\bar{\bm Z}^{(k+1)}
		=
		\bm Z^{(k)}-\eta_k\nabla F(\bm Z^{(k)}).
		\label{eq:proof_bar_update}
	\end{equation}
	The reparameterized state is
	\begin{equation}
		\bm W^{(k+1)}
		=
		t_k\bar{\bm Z}^{(k+1)}
		+
		h_k\bm Z_0^{(k)} .
		\label{eq:proof_w_update}
	\end{equation}
	Finally, applying the flow-induced prior operator gives
	\begin{equation}
		\bm Z^{(k+1)}
		=
		\bm W^{(k+1)}
		+
		h_k\bm v_{\theta}(t_k,\bm W^{(k+1)}).
		\label{eq:proof_z_update}
	\end{equation}

	We first derive the one-step increment. Substituting
	\eqref{eq:proof_bar_update} into \eqref{eq:proof_w_update} and using
	\(\beta_k=t_k\eta_k/h_k\), we obtain
	\begin{equation}
		\bm W^{(k+1)}
		=
		\bm Z^{(k)}
		+
		h_k\Big(
		\bm Z_0^{(k)}
		-\bm Z^{(k)}
		-\beta_k\nabla F(\bm Z^{(k)})
		\Big).
		\label{eq:proof_w_increment}
	\end{equation}
	Therefore, subtracting \(\bm Z^{(k)}\) from \eqref{eq:proof_z_update}
	yields

\begin{align}
	&\bm Z^{(k+1)}-\bm Z^{(k)} \nonumber \\
	=\  &	h_k\Big(
	\bm Z_0^{(k)}
	-\bm Z^{(k)}
	-\beta_k\nabla F(\bm Z^{(k)})
	+
	\bm v_{\theta}(t_k,\bm W^{(k+1)})
	\Big).
	\label{eq:proof_final_increment}
\end{align}
	Next, we replace the velocity evaluated at \(\bm W^{(k+1)}\) by the
	velocity evaluated at the current iterate. Define
	\[
	\bm R^{(k)}
	\triangleq
	\bm v_{\theta}(t_k,\bm W^{(k+1)})
	-
	\bm v_{\theta}(t_k,\bm Z^{(k)}).
	\]
	By the boundedness assumption, the iterates for \(k\ge k_0\) lie in a compact set
	\(\mathcal K\subset\mathbb R^{I\times J}\) almost surely. Since
	\(\nabla F\) is locally Lipschitz, it is bounded on \(\mathcal K\).
	Moreover, \(\{\beta_k\}\) is bounded because \(\beta_k\to\beta\).
	From \eqref{eq:proof_w_increment}, for some finite constant
	\(C_1>0\),
	\begin{equation}
		\mathbb E\!\left[
		\|\bm W^{(k+1)}-\bm Z^{(k)}\|^2
		\mid \mathcal F_k
		\right]
		\le C_1h_k^2 .
		\label{eq:proof_state_difference_bound}
	\end{equation}
	By the uniform Lipschitz property of \(\bm v_{\theta}\) in \(\bm Z\),
	there exists \(L>0\) such that
	\[
	\|\bm R^{(k)}\|
	\le
	L\|\bm W^{(k+1)}-\bm Z^{(k)}\|.
	\]
	Consequently,
	\begin{equation}
		\mathbb E\!\left[\|\bm R^{(k)}\|^2\mid\mathcal F_k\right]
		\le
		L^2C_1h_k^2 .
		\label{eq:proof_residual_bound}
	\end{equation}

	Now define the limiting drift
	\[
	\mathcal G(\bm Z)
	\triangleq
	-\bm Z-\beta\nabla F(\bm Z)+\bm v_{\theta}(1,\bm Z).
	\]
	Substituting
	\(\bm v_{\theta}(t_k,\bm W^{(k+1)})
	=\bm v_{\theta}(t_k,\bm Z^{(k)})+\bm R^{(k)}\)
	into \eqref{eq:proof_final_increment}, the recursion becomes
	\begin{equation}
		\bm Z^{(k+1)}
		=
		\bm Z^{(k)}
		+
		h_k\big(\mathcal G(\bm Z^{(k)})+\bm Z_0^{(k)}\big)
		+
		\bm\varepsilon^{(k)},
		\label{eq:proof_sa_form}
	\end{equation}
	where
\begin{align}
	\bm\varepsilon^{(k)}
	&=
	h_k\Big(
	-(\beta_k-\beta)\nabla F(\bm Z^{(k)}) \nonumber \\
	&\quad
	+ \bm v_{\theta}(t_k,\bm Z^{(k)})
	- \bm v_{\theta}(1,\bm Z^{(k)})
	\Big)
	+ h_k\bm R^{(k)} .
	\label{eq:proof_error_term}
\end{align}

	We next show that the error term is negligible on the ODE time scale.
	Since \(\beta_k\to\beta\), \(t_k\to1\), and
	\(\bm v_{\theta}(t,\bm Z)\) is continuous in \(t\) and uniformly
	Lipschitz in \(\bm Z\), the convergence
	\(\bm v_{\theta}(t_k,\cdot)\to\bm v_{\theta}(1,\cdot)\) is uniform
	on the compact set containing the iterates. Hence, the bracketed term in
	\eqref{eq:proof_error_term} converges to zero almost surely. In addition,
	\eqref{eq:proof_residual_bound} implies \(\bm R^{(k)}=\mathcal O(h_k)\)
	in \(L^2\). Hence,
	\begin{equation}
		\bm\varepsilon^{(k)}=h_k\bm\delta_k+\mathcal O_{L^2}(h_k^2),
		\qquad
		\bm\delta_k\to\bm 0 \quad \text{a.s.}
		\label{eq:proof_error_small}
	\end{equation}
	Thus, over any bounded interval of the algorithmic time \(s\), the
	cumulative contribution of \(\bm\varepsilon^{(k)}\) vanishes as the
	starting index tends to infinity. Indeed, over any interval with
	\(\sum h_k\le T<\infty\), the first term is bounded by
	\(T\sup_{k\ge n}\|\bm\delta_k\|\), while the second term is controlled
	by the remainder of \(\sum h_k^2\) in \(L^2\).

	It remains to control the cumulative stochastic perturbation. Define
	the martingale
	\[
	\bm M_n
	\triangleq
	\sum_{k=k_0}^{n-1}h_k\bm Z_0^{(k)} .
	\]
	Because \(\bm Z_0^{(k)}\) is a martingale difference with uniformly
	bounded second moment,
	\[
	\mathbb E\|\bm M_n-\bm M_m\|^2
	\le
	C_0\sum_{k=m}^{n-1}h_k^2,\qquad n>m\ge k_0 .
	\]
	Since \(\sum_{k=k_0}^{\infty}h_k^2<\infty\), the martingale
	\(\{\bm M_n\}\) is bounded in \(L^2\). By the martingale convergence
	theorem, it converges almost surely. Therefore,
	\begin{equation}
		\lim_{n\to\infty}
		\sup_{m\ge n}
		\left\|
		\sum_{k=n}^{m}h_k\bm Z_0^{(k)}
		\right\|
		=0
		\quad \text{a.s.}
		\label{eq:proof_noise_remainder}
	\end{equation}

	Combining \eqref{eq:proof_sa_form}, \eqref{eq:proof_error_small}, and
	\eqref{eq:proof_noise_remainder}, the resulting late-stage recursion is an Euler
	discretization of the autonomous ODE
	\[
	\frac{d\bm Z(s)}{ds}=\mathcal G(\bm Z(s))
	\]
	up to perturbations whose cumulative effect vanishes on every finite
	algorithmic-time interval. By the ODE method for stochastic approximation,
	the piecewise linear interpolation \(\widetilde{\bm Z}(s)\) is therefore
	an asymptotic pseudo-trajectory of this ODE, which is exactly
	\eqref{eq:asymptotic_ode}. This completes the proof.

	\bibliographystyle{IEEEtran}
	\bibliography{IEEEabrv,StringDefinitions,JCgroup,ChenBibCV}

\end{document}